\begin{document}

\title{The topology of chaotic iterations}
\date{}
\author{Jacques M. Bahi, Christophe Guyeux\\Laboratoire d'Informatique de l'universit\'{e} de Franche-Comt\'{e}, \\90000 Belfort cedex\\T\'{e}l: 03 84 58 77 94; fax: 03 84 58 77 32\\e-mail: jacques.bahi@univ-fcomte.fr, christophe.guyeux@univ-fcomte.fr}
\maketitle

\begin{center}
\textbf{Abstract}
\end{center}

Chaotic iterations have been introduced on the one hand by Chazan,
Miranker~\cite{Chazan69} and Miellou~\cite{Miellou75} in a numerical analysis context, and on the other hand by Robert~\cite{Robert1986} and Pellegrin~\cite{Pellegrin1986} in the discrete dynamical systems framework.
In both cases, the objective was to derive conditions of convergence of
such iterations to a fixed state. In this paper, a new point of view is
presented, the goal here is to derive conditions under which chaotic
iterations admit a chaotic behaviour in a rigorous mathematical sense.
Contrary to what has been studied in the literature, convergence is not
desired.\medskip 

More precisely, we establish in this paper a link between the concept of
chaotic iterations on a finite set and the notion of topological chaos~\cite%
{Li75},~\cite{Dev89},~\cite{Knudsen1994a}. We are motivated by concrete
applications of our approach, such as the use of chaotic boolean iterations
in the computer security field. Indeed, the concept of chaos is used in many
areas of data security without real rigorous theoretical foundations, and
without using the fundamental properties that allow chaos. The wish of this paper is to bring a bit more mathematical rigour in this field. This paper is an extension of\cite{Bahi2008}, and a work in progress.

\section{Introduction}

Let us consider the \emph{system} $\mathds{B}^2 = \{0;1\}^2$, in which each of the
two \emph{cells} $c_i$ is caracterized by a boolean state $e_i$. An \emph{evolution rule} is, for example,

\[
\begin{array}{rccl}
f: & \mathds{B}^2 & \longrightarrow & \mathds{B}^2 \\ 
& (e_1,e_2) & \longmapsto & (e_1+\overline{e_2}, \overline{e_1}) \\ 
&  &  & 
\end{array}%
\]

These cells can be updated in a serial mode (the elements are iterated in a sequential mode, at each time only one
element is iterated), in a parallel mode (at each time, all the elements are
iterated), or by following a sequence $(S^n)_{n \in \mathds{N}}$: the $n^{th}$ term $S^n$ is constituted by the
block components to be updated at the $n^{th}$ iteration. This is the \emph{%
chaotic iterations}, and $S$ is called the \emph{strategy}.
Let us notice that serial and parallel modes are particular cases of chaotic iterations. Until now, only the conditions of convergence have been studied.

\medskip

A priori, the \emph{chaotic} adjective means ``in a disorder
way'', and has nothing to do with the mathematical theory of chaos, studied
by Li-Yorke~\cite{Li75}, Devaney~\cite{Dev89}, Knudsen~\cite{Knudsen1994a}, 
\emph{etc.} We asked ourselves what it really was.

In this paper we study the topological evolution of a system during chaotic
iterations. To do so, chaotic iterations have been written in the field of
discrete dynamical system:

\[
\left\{%
\begin{array}{l}
x^0 \in \mathcal{X} \\ 
x^{n+1} = f(x^n)%
\end{array}
\right. 
\]

\noindent where $(\mathcal{X},d)$ is a metric space (for a distance to be
defined), and $f$ is continuous.

\medskip

Thus, it becomes possible to study the topology of chaotic iterations. More
exactly, the question: ``Are the chaotic iterations a topological chaos ?''
has been raised.%

This study is the first of a series we intend to carry out. We think that
the mathematical framework in which we are placed offers interesting new
tools allowing the conception, the comparison and the evaluation of new
algorithms where disorder, hazard or unpredictability are to be considered.

\bigskip

The rest of the paper is organised as follows.\newline
The first next section is devoted to some recalls on the domain of topological chaos and the domain of discrete chaotic iterations. In third section is defined the framework of our study. Fourth section presents the first results concerning the topology (compacity) of the chaotic iterations. Fifth and sixth sections constitute the study of the chaotic behaviour of such iterations.
In section \ref{CaseOfFinite}, the computer and so the finite set of machine numbers is considered. The paper ends with some discussions and future work.

\section{Basic recalls}

This section is devoted to basic notations and terminologies in
the fields of topological chaos and chaotic iterations.

\subsection{Chaotic iterations}

In the sequel $S^{n}$ denotes the $n^{th}$ term of a sequence $S$, $V_{i}$
denotes the $i^{th}$ component of a vector $V$, and $f^{k}=f\circ ...\circ f$
denotes the $k^{th}$ composition of a function $f$. Finally, the following
notation is used: $\llbracket1;N\rrbracket=\{1,2,\hdots,N\}$.

\medskip Let us consider a \emph{system} of a finite number $\mathsf{N}$ of 
\emph{cells}, so that each cell has a boolean \emph{state}. Then a sequence
of length $\mathsf{N}$ of boolean states of the cells corresponds to a
particular \emph{state of the system}.

A \emph{strategy} corresponds to a sequence $S$ of $\llbracket1;\mathsf{N}%
\rrbracket$. The set of all strategies is denoted by $\mathbb{S}.$

\begin{definition}
Let $S\in \mathbb{S}$. The \emph{shift} function is defined by%
\begin{equation*}
\begin{array}{lclc}
\sigma : & \mathbb{S} & \longrightarrow & \mathbb{S} \\ 
& (S^n)_{n\in \mathds{N}} & \longmapsto & (S^{n+1})_{n\in \mathds{N}}%
\end{array}%
\end{equation*}%
\noindent and the \emph{initial function} is the map which associates to a
sequence, its first term 
\begin{equation*}
\begin{array}{lclc}
i: & \mathbb{S} & \longrightarrow & \llbracket1;\mathsf{N}\rrbracket \\ 
& (S^n)_{n\in \mathds{N}} & \longmapsto & S^0.%
\end{array}%
\end{equation*}
\end{definition}

The set $\mathds{B}$ denoting $\{0,1\}$, let $f:\mathds{B}^{\mathsf{N}%
}\longrightarrow \mathds{B}^{\mathsf{N}}$ be a function, and $S\in \mathbb{S}
$ be a strategy. Then, the so called \emph{chaotic iterations} are defined by

\begin{equation}
\left. 
\begin{array}{l}
x^0\in \mathds{B}^{\mathsf{N}}, \\ 
\forall n\in \mathds{N}^{\ast },\forall i\in \llbracket1;\mathsf{N}\rrbracket%
,x_i^n=\left\{ 
\begin{array}{ll}
x_i^{n-1} & \text{ if }S^n\neq i \\ 
\left(f(x^n)\right)_{S^n} & \text{ if }S^n=i.%
\end{array}%
\right.%
\end{array}%
\right.  \label{chaotic iterations}
\end{equation}

In other words, at the $n^{th}$ iteration, only the $S^{n}-$th
cell is \textquotedblleft iterated\textquotedblright . Note that in a
more general formulation, $S^n$ can be a subset of components, and $f(x^{n})_{S^{n}}$ can be replaced by $f(x^{k})_{S^{n}}$, where $k\leqslant n$, modelizing for example delay
transmission (see \emph{e.g.}~\cite{Bahi2000}). For the general definition
of such chaotic iterations, see, e.g.~\cite{Robert1986}.

\subsection{Chaotic properties of dynamical systems}

Consider a metric space $(\mathcal{X},d)$, and a continuous function $f:%
\mathcal{X}\longrightarrow \mathcal{X}$.

\begin{definition}
$f$ is said to be \emph{topologically transitive} if, for any pair of open
sets $U,V \subset \mathcal{X}$, there exists $k>0$ such that $f^k(U) \cap V
\neq \varnothing$.
\end{definition}

\begin{definition}
$(\mathcal{X},f)$ is said to be \emph{regular} if the set of periodic points
is dense in $\mathcal{X}$.
\end{definition}

\begin{definition}
\label{sensitivity} $f$ has \emph{sensitive dependence on initial conditions}
if there exists $\delta >0$ such that, for any $x\in \mathcal{X}$ and any
neighbourhood $V$ of $x$, there exists $y\in V$ and $n\geqslant 0$ such that $%
|f^{n}(x)-f^{n}(y)|>\delta $.

$\delta$ is called the \emph{constant of sensitivity} of $f$.
\end{definition}

\begin{definition}
$f$ is said to have the property of \emph{expansivity} if 
\begin{equation*}
\exists \varepsilon >0,\forall x\neq y,\exists n\in \mathbb{N}%
,d(f^{n}(x),f^{n}(y))\geqslant \varepsilon .
\end{equation*}

\noindent Then, $\varepsilon $ is the \emph{constant of expansivity }of $f.$ We also say $f$ is $\varepsilon$-expansive.
\end{definition}

\begin{remark}
A function $f$ has a constant of expansivity equals to $\varepsilon $ if an
arbitrary small error on any initial condition is amplified till $%
\varepsilon $.
\end{remark}

\section{A topological approach for chaotic iterations}

In this section we will put our study in a topological context by defining a
suitable metric set.

\subsection{The iteration function and the phase space}

\label{Defining}

Let us denote by $\delta $ the \emph{discrete boolean metric}, $\delta
(x,y)=0\Leftrightarrow x=y,$ and define the function

\begin{equation*}
\begin{array}{lrll}
F_{f}: & \llbracket1;\mathsf{N}\rrbracket\times \mathds{B}^{\mathsf{N}} & 
\longrightarrow & \mathds{B}^{\mathsf{N}} \\ 
& (k,E) & \longmapsto & \left( E_{j}.\delta (k,j)+f(E)_{k}.\overline{\delta
(k,j)}\right) _{j\in \llbracket1;\mathsf{N}\rrbracket},%
\end{array}%
\end{equation*}%
where + and . are boolean operations.

\bigskip

Consider the phase space%
\begin{equation*}
\mathcal{X} = \llbracket 1 ; \mathsf{N} \rrbracket^\mathds{N} \times
\mathds{B}^\mathsf{N},
\end{equation*}

\noindent and the map 
\begin{equation}
G_f\left(S,E\right) = \left(\sigma(S), F_f(i(S),E)\right)   \label{Gf}
\end{equation}

\noindent Then one can remark that the chaotic iterations defined in (\ref%
{chaotic iterations}) can be described by the following iterations 
\begin{equation*}
\left\{ 
\begin{array}{l}
X^0\in \mathcal{X} \\ 
X^{k+1}=G_{f}(X^k).%
\end{array}%
\right.
\end{equation*}%
\bigskip

The following result can be easily proven, by comparing $\mathbb{S}$ and $\mathds{R}$,

\begin{theorem}
The phase space $\mathcal{X}$ has the cardinality of the continuum.
\end{theorem}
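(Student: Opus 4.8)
The plan is to reduce the statement to a cardinality computation on the strategy space $\mathbb{S} = \llbracket 1;\mathsf{N}\rrbracket^{\mathds{N}}$, using the fact that the second factor $\mathds{B}^{\mathsf{N}}$ is finite. First I would observe that $\mathrm{card}\!\left(\mathds{B}^{\mathsf{N}}\right) = 2^{\mathsf{N}}$ is a finite number, so that $\mathrm{card}(\mathcal{X}) = \mathrm{card}(\mathbb{S})\cdot 2^{\mathsf{N}}$, which equals $\mathrm{card}(\mathbb{S})$ as soon as $\mathbb{S}$ is infinite. Hence it suffices to prove that $\mathbb{S}$ has the cardinality of the continuum (here one tacitly assumes $\mathsf{N}\geqslant 2$, the only case relevant to chaotic iterations; for $\mathsf{N}=1$ the phase space is finite).

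Next I would establish $\mathrm{card}(\mathbb{S}) = \mathrm{card}(\mathds{R})$ by exhibiting injections in both directions and invoking the Cantor--Schr\"oder--Bernstein theorem, which is exactly the ``comparing $\mathbb{S}$ and $\mathds{R}$'' hinted at in the excerpt. For the injection $\mathbb{S}\hookrightarrow\mathds{R}$, I would send a strategy $S=(S^n)_{n\in\mathds{N}}$ to the real number whose proper base-$(\mathsf{N}+1)$ expansion is $0.S^0S^1S^2\ldots$; since every digit $S^n$ belongs to $\{1,\ldots,\mathsf{N}\}$ and is therefore never $0$ nor the maximal digit $\mathsf{N}$, two distinct strategies cannot give the same real (the usual ``$0.0999\ldots=0.1000\ldots$'' collision is ruled out), so this map is injective. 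For the reverse injection $\mathds{R}\hookrightarrow\mathbb{S}$, I would recall that $\mathrm{card}(\mathds{R})=\mathrm{card}\!\left(\{0,1\}^{\mathds{N}}\right)$ and embed $\{0,1\}^{\mathds{N}}$ into $\llbracket 1;\mathsf{N}\rrbracket^{\mathds{N}}$ by replacing the symbol $0$ with $1$ and the symbol $1$ with $2$ (legitimate since $\mathsf{N}\geqslant 2$), which is plainly injective.

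Alternatively the whole argument collapses to one line of cardinal arithmetic: $\mathrm{card}(\mathcal{X}) = \mathsf{N}^{\aleph_0}\cdot 2^{\mathsf{N}} = 2^{\aleph_0}$, using $2\leqslant\mathsf{N}<\aleph_0$ together with $2^{\aleph_0}\leqslant\mathsf{N}^{\aleph_0}\leqslant\left(2^{\aleph_0}\right)^{\aleph_0}=2^{\aleph_0}$. I expect no real obstacle here; the only point needing a little care is the explicit injection into $\mathds{R}$, where the non-uniqueness of positional expansions must be sidestepped --- which is precisely why the digits are restricted to a range excluding both endpoints. I would write up the Cantor--Schr\"oder--Bernstein version rather than the bare cardinal arithmetic, as it is more self-contained and matches the paper's suggested approach.
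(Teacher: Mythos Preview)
Your approach is correct and matches the paper's own (very terse) treatment: the paper merely remarks that the result follows ``by comparing $\mathbb{S}$ and $\mathds{R}$'', which is precisely what you carry out via Cantor--Schr\"oder--Bernstein. One small slip to fix in your injection $\mathbb{S}\hookrightarrow\mathds{R}$: in base $\mathsf{N}+1$ the digits run from $0$ to $\mathsf{N}$, so a strategy \emph{can} produce the maximal digit $\mathsf{N}$, contrary to what you wrote; what actually saves injectivity is that the digit $0$ never occurs, so no strategy maps to a terminating expansion and the usual non-uniqueness collision is ruled out already.
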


Note that this result is independent on the number of cells.

\subsection{A new distance}

We define a new distance between two points $(S,E),(\check{S},\check{E})\in 
\mathcal{X}$ by%
\begin{equation*}
d((S,E);(\check{S},\check{E}))=d_{e}(E,\check{E})+d_{s}(S,\check{S}),
\end{equation*}

\noindent where

\begin{equation*}
\left\{ 
\begin{array}{lll}
\displaystyle{d_{e}(E,\check{E})} & = & \displaystyle{\sum_{k=1}^{\mathsf{N}%
}\delta (E_{k},\check{E}_{k})}, \\ 
\displaystyle{d_{s}(S,\check{S})} & = & \displaystyle{\dfrac{9}{\mathsf{N}}%
\sum_{k=1}^{\infty }\dfrac{|S^k-\check{S}^k|}{10^{k}}}.%
\end{array}%
\right. 
\end{equation*}

It should be noticed that if the floor function $\lfloor d(X,Y)\rfloor =n$,
then the strategies $X$ and $Y$ differs in $n$ cells and that $d(X,Y) -
\lfloor d(X,Y) \rfloor $ gives a measure on how the strategies $S$ and $%
\text{\v{S}}$ diverge. More precisely,

\begin{itemize}
\item This floating part is less than $10^{-k}$ if and only if the first $k^{th}$
terms of the two strategies are equal.

\item If the $k^{th}$ digit is nonzero, then the $k^{th}$ terms of the two
strategies are diferent.
\end{itemize}

\subsection{The topological framework}

It can be proved that,

\begin{theorem}
\label{continuite} $G_f$ is continuous on $(\mathcal{X},d)$.
\end{theorem}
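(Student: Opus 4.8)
The plan is to show continuity of $G_f = (\sigma, F_f \circ (i \times \mathrm{id}))$ at an arbitrary point $(S,E) \in \mathcal{X}$ by an $\varepsilon$–$\delta$ argument with respect to the distance $d = d_e + d_s$. Since $d$ is a sum of the ``cell part'' $d_e$ and the ``strategy part'' $d_s$, and since $G_f$ acts componentwise on these two factors, I would first reduce the problem to checking the two coordinates separately; controlling each summand of $d$ on the image by a suitable smallness of each summand on the source.

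First I would treat the $\mathds{B}^{\mathsf{N}}$-coordinate, i.e. the map $(S,E) \mapsto F_f(i(S),E)$. The key observation is that $d_e$ takes only the finitely many values $0,1,\dots,\mathsf{N}$, so a point $(\check S,\check E)$ with $d((S,E),(\check S,\check E)) < 1$ automatically has $\check E = E$ \emph{and} $d_s(S,\check S) < 1$; by the remark recorded after the definition of the distance, $d_s(S,\check S)<1$ forces $\check S^0 = S^0$, hence $i(\check S) = i(S)$. Therefore $F_f(i(\check S),\check E) = F_f(i(S),E)$ exactly, so the first coordinate of $G_f$ is \emph{locally constant} and contributes nothing to the distance on the image once we are within radius $1$ of $(S,E)$. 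This step is essentially trivial once the discrete nature of $d_e$ and the leading-digit behaviour of $d_s$ are invoked.

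Next I would treat the strategy coordinate $(S,E) \mapsto \sigma(S)$. Here I want to show that if the first $k+1$ terms $S^0,\dots,S^k$ of the source strategies agree, then the first $k$ terms of the shifted strategies agree, which by the bullet points after the distance definition means $d_s$ shrinks by (at most) a factor of $10$ under $\sigma$: more precisely $d_s(\sigma(S),\sigma(\check S)) \leqslant 10\, d_s(S,\check S)$ whenever $d_s(S,\check S) < 10^{-1}$, and in general one can bound $d_s(\sigma(S),\sigma(\check S))$ by a quantity that tends to $0$ as $d_s(S,\check S) \to 0$. Concretely, given $\varepsilon > 0$, pick an integer $k$ with $10^{-k} < \varepsilon$ (up to the normalising constant $9/\mathsf{N}$); then demanding $d((S,E),(\check S,\check E)) < \min(1,\,10^{-(k+1)})$ guarantees $\check E = E$ and that $S,\check S$ agree on their first $k+1$ terms, whence $G_f(S,E)$ and $G_f(\check S,\check E)$ agree in the $\mathds{B}^{\mathsf{N}}$-coordinate and differ in the strategy coordinate by less than $\varepsilon$. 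Combining the two coordinates gives $d(G_f(S,E),G_f(\check S,\check E)) < \varepsilon$, which is the desired continuity.

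The main obstacle — really the only subtle point — is the bookkeeping with the series $d_s$: one must be careful that the constant $9/\mathsf{N}$ and the geometric tail $\sum_{j > k} 9\cdot 10^{-j} = 10^{-k}$ (before normalisation) are handled correctly, so that ``first $k$ terms agree'' translates into ``$d_s < 10^{-k}$'' cleanly, and that the index shift in $\sigma$ (term $n$ becomes term $n+1$, digit $10^{-(n+1)}$ becomes $10^{-n}$) is accounted for — this is where the explicit factor $10$ appears. Everything else follows from the discreteness of $d_e$, which conveniently decouples the two coordinates and makes the first coordinate of $G_f$ locally constant. I would present the argument as: fix $(S,E)$ and $\varepsilon>0$, choose $\delta = \min(1, 10^{-(k+1)})$ with $k$ chosen from $\varepsilon$, then verify the two coordinate estimates and add.
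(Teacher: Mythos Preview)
Your argument is correct and follows essentially the same route as the paper's proof: exploit the integrality of $d_e$ to force $\check E=E$ once the distance is below $1$, use smallness of $d_s$ to match the leading strategy term so that the $\mathds{B}^{\mathsf{N}}$-coordinate of $G_f$ is locally constant, and then control the shifted strategies via the factor-of-$10$ estimate on $d_s$. The only cosmetic difference is that the paper phrases everything via sequential continuity (taking a convergent sequence $(S^n,E^n)\to(S,E)$ and showing $G_f(S^n,E^n)\to G_f(S,E)$) whereas you give the equivalent direct $\varepsilon$--$\delta$ version; in a metric space these are interchangeable and the underlying observations are identical.
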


\begin{proof}
We use the sequential continuity (we are in a metric space).

Let $(S^n,E^n)_{n\in \mathds{N}}$ be a sequence of the phase space $\mathcal{%
X}$, which converges to $(S,E)$. We will prove that $\left(
G_{f}(S^n,E^n)\right) _{n\in \mathds{N}}$ converges to $G_{f}(S,E) $. Let us recall that for all $n$, $S^n$ is a strategy,
thus, we consider a sequence of strategy (\emph{i.e.} a sequence of
sequences).\newline
As $d((S^n,E^n);(S,E))$ converges to 0, each distance $d_{e}(E^n,E)$ and $d_{s}(S^n,S)$ converges to 0. But $d_{e}(E^n,E)$ is an integer, so $\exists n_{0}\in \mathds{N},$ $%
d_{e}(E^n,E)=0$ for any $n\geqslant n_{0}$.\newline
In other words, there exists threshold $n_{0}\in \mathds{N}$ after which no
cell will change its state: 
\[
\exists n_{0}\in \mathds{N},n\geqslant n_{0}\Longrightarrow E^n=E.
\]%
In addition, $d_{s}(S^n,S)\longrightarrow 0,$ so $\exists n_{1}\in \mathds{N}%
,d_{s}(S^n,S)<10^{-1}$ for all indices greater than or equal to $n_{1}$.
This means that for $n\geqslant n_{1}$, all the $S^n$ have the same first
term, which is $S_0$:%
\[
\forall n\geqslant n_{1},S_0^n=S_0.
\]%
Thus, after the $max(n_{0},n_{1})-$th term, states of $E^n$ and $E$ are the
same, and strategies $S^n$ and $S$ start with the same first term.\newline
Consequently, states of $G_{f}(S^n,E^n)$ and $G_{f}(S,E)$ are equal, then
distance $d$ between this two points is strictly less than 1 (after the rank 
$max(n_{0},n_{1})$).\bigskip \newline
\noindent We now prove that the distance between $\left(
G_{f}(S^n,E^n)\right) $ and $\left( G_{f}(S,E)\right) $ is convergent to 0.
Let $\varepsilon >0$. \medskip

\begin{itemize}
\item If $\varepsilon \geqslant 1$, then we have seen that the distance
between $\left( G_{f}(S^n,E^n)\right) $ and $\left( G_{f}(S,E)\right) $ is
strictly less than 1 after the $max(n_{0},n_{1})^{th}$ term (same state).
\medskip

\item If $\varepsilon <1$, then $\exists k\in \mathds{N},10^{-k}\geqslant
\varepsilon \geqslant 10^{-(k+1)}$. But $d_{s}(S^n,S)$ converges to 0, so 
\[
\exists n_{2}\in \mathds{N},\forall n\geqslant
n_{2},d_{s}(S^n,S)<10^{-(k+2)},
\]%
after $n_{2}$, the $k+2$ first terms of $S^n$ and $S$ are equal.
\end{itemize}

\noindent As a consequence, the $k+1$ first entries of the strategies of $%
G_{f}(S^n,E^n)$ and $G_{f}(S,E)$ are the same (because $G_{f}$ is a shift of
strategies), and due to the definition of $d_{s}$, the floating part of the
distance between $(S^n,E^n)$ and $(S,E)$ is strictly less than $%
10^{-(k+1)}\leqslant \varepsilon $.\bigskip \newline
In conclusion, $G_{f}$ is continuous,%
\[
\forall \varepsilon >0,\exists N_{0}=max(n_{0},n_{1},n_{2})\in \mathds{N}%
,\forall n\geqslant N_{0},d\left( G_{f}(S^n,E^n);G_{f}(S,E)\right) \leqslant
\varepsilon .
\]
\end{proof}

Then chaotic iterations can be seen as a dynamical system in a topological
space. In the next section, we will study the compacity of such a topological space, with a view to prove the expansive chaos in section \ref{expansif}.

\section{Compacity}

To prove that $(\mathcal{X},G_f)$ is a compact topological space, we have to check whether it is separate or not. Then, the sequential characterisation of the compacity for the metric spaces will be used to obtain the result.

\subsection{Separated spaces}

This section starts with some basic recalls...

\begin{definition}
A topological space $(X,\tau)$ is said to be a \emph{separated space} if for any points $x\neq y \in X$, there exist two open sets $\omega_x, \omega_y$ such that $x \in \omega_x, y \in \omega_y$ and $\omega_x \cap \omega_y = \varnothing$.
\end{definition}

\begin{theorem}
$(\mathcal{X},G_f)$ is a separated space.
\end{theorem}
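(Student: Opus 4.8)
The statement asserts that $(\mathcal{X},d)$ is a separated (Hausdorff) space. Since $\mathcal{X}$ is a metric space by construction (we have exhibited an explicit distance $d$), the natural and shortest route is simply to invoke the fact that every metric space is Hausdorff: given $(S,E)\neq(\check S,\check E)$, we have $r:=d((S,E);(\check S,\check E))>0$, and the open balls $B\left((S,E),\tfrac{r}{2}\right)$ and $B\left((\check S,\check E),\tfrac{r}{2}\right)$ are disjoint by the triangle inequality. The only thing to be careful about is that $d$ really is a metric: separation of points ($d=0\iff$ the two points coincide), symmetry, and the triangle inequality. Symmetry is immediate from the definitions of $d_e$ and $d_s$; the triangle inequality follows because $d_e$ is a finite sum of the discrete metrics $\delta$ (each of which satisfies it) and $d_s$ is, up to the positive constant $9/\mathsf{N}$, a weighted $\ell^1$-type series $\sum_k |S^k-\check S^k|/10^k$ whose terms each obey the triangle inequality and whose sum converges (bounded above by $9/\mathsf{N}\cdot\sum_k (\mathsf{N}-1)/10^k<\infty$ since $|S^k-\check S^k|\leqslant \mathsf{N}-1$).

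First I would verify that $d$ is well defined and finite: $d_e$ is a sum of $\mathsf{N}$ terms each at most $1$, and the series defining $d_s$ is dominated by a convergent geometric series, so $d$ takes values in $[0,\mathsf{N}+9/\mathsf{N}\cdot(\mathsf{N}-1)/9]$ or thereabouts — in any case a finite real number. Next I would check the point-separation axiom: $d((S,E);(\check S,\check E))=0$ forces both summands to vanish (they are nonnegative), so $d_e(E,\check E)=0$ gives $\delta(E_k,\check E_k)=0$ for all $k$, i.e. $E=\check E$, while $d_s(S,\check S)=0$ forces $|S^k-\check S^k|=0$ for all $k$, i.e. $S=\check S$. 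Then symmetry and the triangle inequality as sketched above. Having established that $(\mathcal{X},d)$ is a metric space, the Hausdorff property is the classical consequence and the proof closes.

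Alternatively, if one prefers an entirely self-contained argument not citing ``metric $\Rightarrow$ Hausdorff'', one can construct the separating open sets by hand: given $(S,E)\neq(\check S,\check E)$, either $E\neq\check E$ — in which case the ``cylinder'' sets $\{(T,B):B=E\}$ and $\{(T,B):B=\check E\}$ are open (they are unions of open balls of radius, say, $1/2$, since any point within distance $<1$ of a given point has the same $E$-component) and obviously disjoint — or $E=\check E$ but $S\neq\check S$, in which case there is a least index $k$ with $S^k\neq\check S^k$, and the sets of points whose strategy agrees with $S$ (resp. $\check S$) on the first $k$ coordinates are open (by the remark following the definition of $d$, agreement on the first $k$ terms is an open condition) and disjoint. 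Either way the space is separated.

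I do not anticipate a genuine obstacle here; the ``hard part'', such as it is, is merely the bookkeeping needed to confirm the triangle inequality for $d_s$ — and even that reduces termwise to $|a-c|\leqslant|a-b|+|b-c|$ on the integers, summed against the positive weights $9/(\mathsf{N}\cdot 10^k)$. The result is therefore essentially a formality once the metric axioms for $d$ are recorded, which is presumably why the authors state it without much fanfare; its real purpose is to license the use of the sequential characterisation of compactness in the next subsection.
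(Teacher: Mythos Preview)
Your proposal is correct. Your \emph{alternative} argument --- splitting into the cases $E\neq\check E$ (balls of radius $1/2$) versus $E=\check E$ with first discrepancy at index $k$ (balls of radius $10^{-(k+1)}$) --- is exactly what the paper does, with the same radii. Your \emph{primary} route, verifying the metric axioms for $d$ and then invoking the general fact that every metric space is Hausdorff, is a cleaner and more economical argument that the paper does not spell out; it has the added benefit of recording once and for all that $d$ is a genuine metric (the paper uses this implicitly throughout, e.g.\ for sequential continuity and the sequential characterisation of compactness, but never checks the triangle inequality explicitly). Either path is fine here; the paper's hands-on version has the minor advantage of exhibiting concrete separating radii tied to the structure of $d$, which foreshadows the kinds of estimates used later.
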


\begin{proof}
Let $(E,S) \neq (\textrm{\^{E}},\textrm{\^{S}})$.

\begin{enumerate}
\item If $E \neq \textrm{\^{E}}$, then the intersection between the two balls  $\mathcal{B}\left((E,S),\frac{1}{2}\right)$ and $\mathcal{B}\left((\textrm{\^{E}},\textrm{\^{S}}), \frac{1}{2}\right)$  in empty.
\item Else, it exists $k\in\mathds{N}$ such that $S^k \neq \textrm{\^{S}}^k$, then the balls $\mathcal{B}\left((E,S),10^{-(k+1)}\right)$ and $\mathcal{B}\left((\textrm{\^{E}},\textrm{\^{S}}), 10^{-(k+1)}\right)$ can be chosen.
\end{enumerate}
\end{proof}

\subsection{Compact spaces}

\begin{definition}
A topological space $(X,\tau)$ is said to be \emph{compact} if it is a separated space, and if each of its open covers has a finite subcover.
\end{definition}

\begin{definition}
Let $(X,\tau)$ be a topological space, and $A$ a subset of $X$. $a\in A$ is an \emph{accumulation point} if $\forall V \in \mathcal{V}_a, V \cap A \neq \varnothing$, and $V \cap A \neq \{a\}$. 
\end{definition}

Let us now recall the sequential characterisation of the compacity for the metric spaces:

\begin{theorem}
Let $(E,d)$ be a metric space, and $K \subset E$. The following properties are equivalents:
\begin{enumerate}
\item $K$ is a compact space.
\item For any sequence of $K$, it can be possible to extract another sequence which converge in  $K$.
\item Any sequence of $K$ has an adherence value in $K$.
\item Any infinite subset of $K$ has an accumulation point in $K$.
\end{enumerate}
\end{theorem}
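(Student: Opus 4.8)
The plan is to prove the cycle of implications $(1)\Rightarrow(4)\Rightarrow(3)\Rightarrow(2)\Rightarrow(1)$. Throughout one uses that any metric space is separated, so the separation clause in the definition of compactness costs nothing; only the finite-subcover property has to be produced or exploited.

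The three implications $(1)\Rightarrow(4)$, $(4)\Rightarrow(3)$ and $(3)\Rightarrow(2)$ are the routine ones. For $(1)\Rightarrow(4)$, argue by contradiction: if an infinite subset $A\subset K$ had no accumulation point in $K$, every $x\in K$ would admit a neighbourhood $V_x$ with $V_x\cap A\subset\{x\}$; extracting a finite subcover from the open cover $(V_x)_{x\in K}$ of $K$ would force $A$ to be finite. For $(4)\Rightarrow(3)$, given a sequence $(u_n)$ in $K$, either its set of values is finite — then some value is attained infinitely often and is an adherence value — or it is infinite, hence has an accumulation point $a\in K$; since we are in a metric space, each ball $B(a,\varepsilon)$ then contains infinitely many terms $u_n$, so $a$ is an adherence value of $(u_n)$. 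For $(3)\Rightarrow(2)$, from an adherence value $a$ one builds inductively indices $n_0<n_1<\cdots$ with $u_{n_k}\in B(a,2^{-k})$ — always possible since each such ball catches infinitely many terms — and the extracted subsequence converges to $a$ in $K$.

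The delicate step, and the one I expect to be the main obstacle, is $(2)\Rightarrow(1)$. I would split it into two lemmas. First, $(2)$ forces $K$ to be totally bounded: otherwise there is $\varepsilon>0$ for which one can choose $u_0,u_1,\ldots\in K$ with $d(u_i,u_j)\geqslant\varepsilon$ for all $i\neq j$, a sequence with no convergent subsequence. Second (Lebesgue number lemma), every open cover $\mathcal{U}$ of $K$ has a number $\lambda>0$ such that each ball of radius $\lambda$ lies inside some member of $\mathcal{U}$: if not, pick for each $n$ a point $x_n$ with $B(x_n,1/n)$ contained in no member of $\mathcal{U}$, extract via $(2)$ a subsequence $x_{n_k}\to x\in K$, choose $U\in\mathcal{U}$ and $r>0$ with $B(x,r)\subset U$, and observe that $d(x_{n_k},x)+1/n_k<r$ for $k$ large, whence $B(x_{n_k},1/n_k)\subset U$ — a contradiction. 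Combining the two lemmas finishes the proof: given $\mathcal{U}$, take a Lebesgue number $\lambda$, cover $K$ by finitely many balls $B(x_1,\lambda),\ldots,B(x_m,\lambda)$ by total boundedness, and for each $i$ pick $U_i\in\mathcal{U}$ with $B(x_i,\lambda)\subset U_i$; then $U_1,\ldots,U_m$ is the desired finite subcover.
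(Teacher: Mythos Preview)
Your proof is correct and follows the standard textbook route for this equivalence. However, the paper does not actually prove this theorem: it is introduced with the phrase ``Let us now recall the sequential characterisation of the compacity for the metric spaces'' and stated without proof, as a classical result quoted only to justify the compactness proof of $(\mathcal{X},d)$ that follows. So there is no proof in the paper to compare yours against; you have supplied a complete argument where the authors simply invoked a well-known fact.
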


\subsection{Compacity result}

\begin{theorem}
$(\mathcal{X},d)$ is a compact metric space.
\end{theorem}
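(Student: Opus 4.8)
The plan is to use the sequential characterisation of compactness for metric spaces (property~2 of the recalled theorem): given an arbitrary sequence in $\mathcal{X}$, I must extract a convergent subsequence whose limit lies in $\mathcal{X}$. Since $\mathcal{X} = \llbracket 1;\mathsf{N}\rrbracket^{\mathds{N}}\times \mathds{B}^{\mathsf{N}}$, a point of $\mathcal{X}$ is a pair $(S,E)$, and the distance $d$ splits as $d_e(E,\check E)+d_s(S,\check S)$; so it suffices to extract subsequences that converge separately in the two factors, and the sum of the two metrics will then converge.

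First I would handle the $\mathds{B}^{\mathsf{N}}$ component. Let $(S^n,E^n)_{n\in\mathds{N}}$ be a sequence of $\mathcal{X}$. The set $\mathds{B}^{\mathsf{N}}$ is finite, so by the pigeonhole principle there is some $E\in\mathds{B}^{\mathsf{N}}$ and an infinite set of indices $n$ for which $E^n=E$; restrict to that subsequence, so that from now on the $\mathds{B}^{\mathsf{N}}$-part is constant equal to $E$ and hence $d_e(E^n,E)=0$ along it. Next I would treat the strategy component by a diagonal (Cantor) extraction: since each coordinate $S^n_k$ takes values in the finite set $\llbracket 1;\mathsf{N}\rrbracket$, I can first extract a subsequence along which the first terms $S^n_0$ are all equal to some value $S_0$, then a further subsequence along which the second terms are constant equal to some $S_1$, and so on; the diagonal subsequence then has the property that for every $k$, its $k$-th term is eventually constant equal to $S_k$. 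Define $S=(S_k)_{k\in\mathds{N}}\in\llbracket 1;\mathsf{N}\rrbracket^{\mathds{N}}$.

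It then remains to check that this diagonal subsequence, call it $(S^{\varphi(n)},E)_{n}$, converges to $(S,E)$ in $(\mathcal{X},d)$. Fix $\varepsilon>0$ and pick $k$ with $\tfrac{9}{\mathsf{N}}\cdot\tfrac{10}{9}\cdot 10^{-(k+1)}=10^{-k}\cdot\tfrac{9}{\mathsf{N}}\cdot\tfrac{10}{9}\leqslant\varepsilon$ (more simply, choose $k$ so that $10^{-k}\leqslant\varepsilon$, using the explicit tail bound on $d_s$ noted after its definition: $d_s(S,\check S)-\lfloor\cdot\rfloor<10^{-k}$ once the first $k$ terms agree). By construction there is an index $N_0$ beyond which the first $k$ terms of $S^{\varphi(n)}$ coincide with those of $S$, so $d_s(S^{\varphi(n)},S)<10^{-k}\leqslant\varepsilon$ for $n\geqslant N_0$; and $d_e(E^{\varphi(n)},E)=0$ throughout. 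Hence $d((S^{\varphi(n)},E),(S,E))\to 0$, and $(S,E)\in\mathcal{X}$, which establishes property~2. Combined with the already-proven fact that $(\mathcal{X},d)$ is separated, this shows $(\mathcal{X},d)$ is a compact metric space.

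The only real subtlety — the ``main obstacle'' — is the diagonal extraction for the strategy space: one must be careful that the nested subsequences are chosen so that the diagonal sequence is eventually a subsequence of each of them, which is exactly what guarantees that every fixed coordinate stabilises; this is the standard proof that a countable product of finite (hence compact) sets is sequentially compact, and everything else is routine given the explicit form of $d_s$. Alternatively, one could simply invoke Tychonoff's theorem together with metrisability, but the direct diagonal argument is more self-contained and matches the elementary tone of the paper.
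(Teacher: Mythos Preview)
Your proposal is correct and follows essentially the same route as the paper: both use the sequential characterisation, first fix the $\mathds{B}^{\mathsf{N}}$-component by pigeonhole on the finite state space, then perform a nested (diagonal) extraction on the successive coordinates of the strategy using finiteness of $\llbracket 1;\mathsf{N}\rrbracket$, and conclude that the resulting subsequence converges to the coordinatewise limit. Your write-up is in fact a bit more careful than the paper's about the diagonal step and the convergence verification.
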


\begin{proof}
First, $(\mathcal{X},d)$ is a separate space.

Let $(E^n,S^n)_{n \in \mathds{N}}$ be a sequence of $\mathcal{X}$.
\begin{enumerate}
\item A state $E^{\textrm{\~{n}}}$ which appears an infinite number of time in this sequence can be found. Let
$$I = \{ (E^n, S^n) | E^n=E^{\textrm{\~{n}}}\}.$$
For all $(E,S) \in I$, $S^n[0] \in \llbracket 1, \mathsf{N} \rrbracket$, and $I$ is an infinite set. Then it can be found $\textrm{\~{k}} \in \llbracket 1, \mathsf{N} \rrbracket$ such that an infinite number of strategies of $I$ start with $\textrm{\~{k}}$.

Let $n_0$ be the smallest integer such that $E^n = E^{\textrm{\~{n}}}$ and $S^{n}[0] = \textrm{\~{k}}$.

\item The set 
$$I' = \{(E^n,S^n) | E^n = E^{n_0} \textrm{ et } S^n[0] = S^{n_0}[0]\}$$

is infinite, then one of the element of $\llbracket 1, \mathsf{N} \rrbracket$ will appear an infinite number of times in the $S^n[1]$ of $I'$: let us call it $\textrm{\~{l}}$.

Let $n_1$ be the smallest $n$ such that $(E^n,S^n) \in I'$ and $S^n[1] = \textrm{\~{l}}$.

\item The set
$$I'' = \{(E^n,S^n) | E^n = E^{n_0}, S^n[0] = S^{n_0}[0], S^n[1] = S^{n_1}[1]\}$$

is infinite, \emph{etc.}
\end{enumerate}

\noindent Let $l = \left(E^{n_0},\left(S{n_k}[k]\right)_{k \in \mathds{N}}\right)$, then the subsequence $\left(E^{n_k},S^{n_k}\right)$ converge to $l$.
\end{proof}

\section{Topological chaotic properties}

To prove that we are in the framework of topological chaos, we
have to check some topological conditions.

\subsection{Regularity}

\label{regularite}

\begin{theorem}
Periodic points of $G_{f}$ are dense in $\mathcal{X}$.
\end{theorem}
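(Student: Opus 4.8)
The goal is to show that the set of periodic points of $G_f$ is dense in $(\mathcal{X},d)$, i.e.\ given any point $(S,E)\in\mathcal{X}$ and any $\varepsilon>0$, I must exhibit a periodic point $(\tilde S,\tilde E)$ of $G_f$ within distance $\varepsilon$ of $(S,E)$. The plan is to keep the boolean part unchanged, $\tilde E=E$ (this costs nothing in $d_e$), and to modify only the tail of the strategy $S$ so that it becomes eventually periodic while agreeing with $S$ on a long enough prefix to make $d_s(\tilde S,S)$ small.

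First, given $\varepsilon>0$, choose $k\in\mathds{N}$ large enough that $\tfrac{9}{\mathsf{N}}\sum_{j>k}10^{-j}<\varepsilon$ (equivalently, $10^{-k}$ small enough); then any strategy agreeing with $S$ on its first $k$ terms is within $\varepsilon$ of $S$ in $d_s$. Now the subtlety: I cannot simply repeat the prefix $S^0,\ldots,S^{k-1}$ periodically and expect $(S',E)$ to be $G_f$-periodic, because applying $G_f$ not only shifts the strategy but also updates $E$. So I must track the boolean configuration. Consider the configuration $E$ and apply the block-iteration dictated by $S^0,S^1,\ldots$; since $\mathds{B}^{\mathsf{N}}$ is finite, among the iterates $E, F_f(S^0,E), F_f(S^1,F_f(S^0,E)),\ldots$ some configuration recurs. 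The clean way is: fix the prefix $S^0,\ldots,S^{k-1}$, let $E'$ be the configuration obtained from $E$ after applying these $k$ block-updates, and then append to the prefix a finite word $w$ of indices that drives $E'$ back to $E$ — for instance, since one can always return a configuration to a previous one by a suitable finite sequence of single-cell updates (indeed, update each cell that differs, one at a time, using $f$; or more simply, use the fact that on a finite set every map's iteration is eventually periodic and choose the period). Then define $\tilde S$ to be the infinite periodic repetition of the finite word $(S^0,\ldots,S^{k-1},w)$, and set $\tilde E=E$.

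With this construction, $G_f^{\,k+|w|}(\tilde S,\tilde E)=(\sigma^{k+|w|}(\tilde S), E'')$ where $\sigma^{k+|w|}(\tilde S)=\tilde S$ by periodicity of the strategy, and $E''=E$ by the choice of $w$; hence $(\tilde S,\tilde E)$ is a periodic point of $G_f$ of period dividing $k+|w|$. Moreover $\tilde S$ agrees with $S$ on the first $k$ terms and $\tilde E=E$, so $d((\tilde S,\tilde E),(S,E))=d_s(\tilde S,S)<\varepsilon$. This proves density.

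The main obstacle is the coupling between the shift on $S$ and the update of $E$: a naive "periodize the prefix of $S$" argument fails because $G_f$ moves the boolean component too, so the real content is the lemma that one can append a finite control word $w$ to the prefix that steers the boolean configuration from $E'$ back to $E$ — equivalently, exhibiting that the finite-state dynamics can be made to return to its start. I would either argue this directly (update, one cell at a time, each coordinate on which $E'$ and $E$ disagree, noting that a single-cell update can flip that cell to the desired value when $f$ permits, or otherwise pad) or, more robustly, invoke eventual periodicity of iterations on the finite set $\mathds{B}^{\mathsf{N}}$ to extract a genuine period and build $\tilde S$ from the corresponding periodic portion of the strategy. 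A short remark on why $\tilde E=E$ is legitimate (it keeps $d_e=0$) and why finitely many tail terms of $S$ are irrelevant to being within $\varepsilon$ completes the argument.
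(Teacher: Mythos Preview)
Your construction is structurally the right one, and you correctly isolate the real difficulty --- the coupling between the shift on $S$ and the update on $E$ --- which the paper's argument glosses over. When a return word $w$ exists, your $(\tilde S,E)$ is genuinely $G_f$-periodic and $\varepsilon$-close to $(S,E)$, exactly as you claim.

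The genuine gap is the existence of $w$ for an \emph{arbitrary} $f$, and neither of your two proposed fixes closes it. Flipping the differing cells one at a time only works ``when $f$ permits'', as you yourself note; and the eventual--periodicity argument on the finite set $\mathds{B}^{\mathsf{N}}$ only shows that the orbit of $E$ under the fixed block of updates is \emph{eventually} periodic --- $E$ itself may lie in the transient part and never recur, so you cannot extract a period containing $E$. Concretely, take $\mathsf{N}\ge 2$ and $f\equiv 0$. Then $F_f(j,\cdot)$ sets cell $j$ to $0$ irreversibly. For the point $(S,E)$ with $E_1=1$ and $S=(1,1,1,\ldots)$, any periodic point $(\tilde S,\tilde E)$ at distance less than $9/(10\mathsf{N})$ must have $\tilde E=E$ (so $\tilde E_1=1$) and $\tilde S^{1}=1$; hence the index $1$ occurs in the periodic block of $\tilde S$, cell $1$ is set to $0$ during one period and can never return to $1$, and the state does not come back to $\tilde E$. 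So this ball contains \emph{no} periodic point, and the statement as asserted for all $f$ is in fact false.

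For comparison, the paper proceeds differently: it appends an infinite tail of $1$'s to the copied prefix of $S$ and argues about the eventual behaviour of cell $1$. But the point $(S',E)$ so built is only \emph{pre}-periodic under $G_f$ (its strategy is not $\sigma$-periodic unless the copied prefix already consisted entirely of $1$'s), so that argument has its own gap and is defeated by the same counterexample. Your scheme is the correct repair whenever a return word is available --- for instance for the negation map $f_0$ used elsewhere in the paper, where each $F_{f_0}(j,\cdot)$ is an involution and one may simply take $w=(S^{0},\ldots,S^{k-1})$, making $(\tilde S,E)$ periodic of period $2k$ --- but it does not go through for general $f$.
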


\begin{proof}
Let $(S,E)\in \mathcal{X}$, and $\varepsilon >0$. We are looking for a
periodic point $(S^{\prime },E^{\prime })$ satisfying $d((S,E);(S^{\prime
},E^{\prime }))<\varepsilon$.

We choose $E^{\prime }=E$, and we reproduce enough entries from $S$ to $%
S^{\prime }$ so that the distance between $(S^{\prime },E)$ and $(S,E)$ is
strictly less than $\varepsilon $: a number $k=\lfloor log_{10}(\varepsilon
)\rfloor +1$ of terms is sufficient.\newline
After this $k^{th}$ iterations, the new common state is $\mathcal{E}$, and
strategy $S^{\prime }$ is shifted of $k$ positions: $\sigma ^{k}(S^{\prime })$.\newline
Then we have to complete strategy $S^{\prime }$ in order to make $(E^{\prime
},S^{\prime })$ periodic (at least for sufficiently large indices). To do
so, we put an infinite number of 1 to the strategy $S^{\prime }$.

Then, either the first state is conserved after one iteration, so $\mathcal{E%
}$ is unchanged and we obtain a fixed point. Or the first state is not
conserved, then: if the first state is not conserved after a second
iteration, then we will be again in the first case above (due to the fact that a state is a boolean). Otherwise the first state is conserved, and we have
indeed a fixed (periodic) point.

Thus, there exists a periodic point into every neighbourhood of any point, so 
$(\mathcal{X},G_f)$ is regular, for any map $f$.
\end{proof}

\subsection{Transitivity}

\label{transitivite} Contrary to the regularity, the topological
transitivity condition is not automatically satisfied by any function \newline ($
f=Identity$ is not topologically transitive). Let us denote by $\mathcal{T}$
the set of maps $f$ such that $(\mathcal{X},G_{f})$ is topologically
transitive.

\begin{theorem}
$\mathcal{T}$ is a nonempty set.
\end{theorem}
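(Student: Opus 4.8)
The plan is to exhibit one explicit map $f$ for which $(\mathcal{X},G_f)$ is topologically transitive, since establishing that $\mathcal{T}$ is nonempty only requires a single witness. The natural candidate is $f_0(x)=(\overline{x_1},\hdots,\overline{x_{\mathsf{N}}})$, or even better the map $f_0(x_1,\hdots,x_{\mathsf{N}})=(\overline{x_1},x_1,\hdots,x_{\mathsf{N}-1})$, whose associated $G_{f_0}$ combines a shift on strategies with a genuinely ``mixing'' action on $\mathds{B}^{\mathsf{N}}$; I would first fix such an $f_0$ and compute $F_{f_0}$ explicitly so that the effect of one step of $G_{f_0}$ on a state is transparent. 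The key observation is that iterating $G_{f_0}$ starting from $(S,E)$ produces, after $k$ steps, a state depending on $E$ and on $S^0,\hdots,S^{k-1}$, while the strategy component has been shifted by $\sigma^k$; so by choosing the initial terms of the strategy freely I can steer the state component to any desired target in $\mathds{B}^{\mathsf{N}}$ within at most $\mathsf{N}$ (or a bounded number of) steps.

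The core of the argument is then a density/ball-chasing computation. Given two nonempty open sets $U,V\subset\mathcal{X}$, I would pick $(S,E)\in U$ and $(\check S,\check E)\in V$ together with radii $\eta>0$ such that $\mathcal{B}((S,E),\eta)\subset U$ and $\mathcal{B}((\check S,\check E),\eta)\subset V$. Using the description of $d$ recalled in the excerpt — where $d_e$ is an integer-valued Hamming distance and the fractional part of $d_s$ is controlled digit by digit — membership of a point $(S',E')$ in $\mathcal{B}((S,E),\eta)$ amounts to requiring $E'=E$ and that $S'$ agrees with $S$ on its first $m$ terms, where $m$ depends only on $\eta$ and $\mathsf{N}$. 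I would then construct a single strategy $S'$ that (i) starts with the $m$ prescribed terms of $S$ so that $(S',E)\in U$, (ii) continues with a carefully chosen finite block of $\mathsf{N}$ (or fewer) terms that drives the state from $E$ to $\check E$ under the iterated $F_{f_0}$, and (iii) ends by copying the first $\check m$ terms of $\check S$, so that after $k:=m+(\text{length of the driving block})$ iterations the point $G_{f_0}^{\,k}(S',E)$ lies in $\mathcal{B}((\check S,\check E),\eta)\subset V$. This shows $G_{f_0}^{\,k}(U)\cap V\neq\varnothing$, which is exactly topological transitivity, hence $f_0\in\mathcal{T}$ and $\mathcal{T}\neq\varnothing$.

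The main obstacle I expect is step (ii): showing that the chosen $f_0$ is ``controllable'' in the sense that from an arbitrary boolean state $E$ one can reach any target $\check E$ by a suitable finite sequence of single-cell updates of the form dictated by $F_{f_0}$ — and doing so with a block length bounded independently of $E$ and $\check E$ (a bound like $\mathsf{N}$), so that $k$ stays finite and the estimate in step (iii) goes through cleanly. For the negation-type map this is a short case analysis: updating cell $i$ when $f_0$ negates that coordinate flips exactly bit $i$, so $\mathsf{N}$ well-chosen updates suffice to turn any $E$ into any $\check E$. A secondary, purely bookkeeping nuisance is keeping the three strategy segments compatible — making sure the prefix needed for $U$, the driving block, and the prefix needed for $V$ can be concatenated into one legitimate element of $\mathbb{S}$ without interfering with one another, which is immediate once one tracks the shift $\sigma^k$ correctly. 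Everything else (continuity of $G_{f_0}$, the structure of $d$) is already available from the earlier sections.
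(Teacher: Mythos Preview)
Your proposal is correct and follows essentially the same route as the paper: exhibit the coordinatewise negation $f_0$, then build a strategy that first copies enough of $S$ to stay in the source ball, then inserts a short block flipping exactly the cells where the current state differs from $\check E$, and finally copies enough of $\check S$ to land in the target ball. One small slip: in step (ii) the driving block must steer the state from the state reached \emph{after} the length-$m$ prefix (what the paper calls $E'$), not from the original $E$; your controllability argument for $f_0$ covers this anyway, but the wording should be adjusted.
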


\begin{proof}
We will prove that the vectorial logical negation function $f_{0}$

\begin{equation}
\begin{array}{rccc}
f_{0}: & \mathds{B}^{\mathsf{N}} & \longrightarrow & \mathds{B}^{\mathsf{N}}
\\ 
& (x_{1},\hdots,x_{\mathsf{N}}) & \longmapsto & (\overline{x_{1}},\hdots,%
\overline{x_{\mathsf{N}}}) \\ 
&  &  & 
\end{array}
\label{f0}
\end{equation}%
\noindent is topologically transitive.\newline
Let $\mathcal{B}_A=\mathcal{B}(X_{A},r_{A})$ and $\mathcal{B}_B=\mathcal{B}(X_{B},r_{B})$ be two
open balls of $\mathcal{X}$, where $X_A=(S_A,E_A)$, and $X_B=(S_B,E_B)$. Our goal is to start from a point of $\mathcal{B}_A$ and to arrive, after some iterations of $G_{f_0}$, in $\mathcal{B}_B$.\newline
We have to be close to $X_{A}$, then the starting state $E$ must be $E_{A}$; it
remains to construct the strategy $S$. Let $S^n = S_{A}^n, \forall n \leqslant n_0$, where $n_0$ is chosen in such a way that $(S,E_{A})\in \mathcal{B}_{A}$, and $E'$ be the state of $G_{f_0}^{n_0}(S_{A},E_{A})$.\newline
$E'$ difers from $E_{B}$ by a finite number of cells $c_1,\hdots, c_{n_1}$. Let $S^{n_0+n} = c_n, \forall n \leqslant n_1$. Then the state of $G_{f_0}^{n_0+n_1}(S,E)$ is $E_{B}$.\newline
Last, let $S^{n_0+n_1+n} = S_{B}^n, \forall n \leqslant n_2$, where $n_2$ is chosen in such a way that $G_{f_0}^{n_0+n_1}(S,E)$ is at a distance less than $r_B $ from $(S_{B},E_{B})$. Then, starting from a point $(S,E)$ close to $X_A$, we are close to $X_B$ after $n_0+n_1$ iterations: $(\mathcal{X},G_{f_0})$ is transitive.
\end{proof}

\begin{Remark}
If, in the preceeding proof, the strategy were completed using $S_B$, then it can be proved that there exist a point $X$ close to $X_A$, and $k_0 \in \mathds{N}$, such that $G_{f_0}^{k_0}(X) = X_B$: this property is called \emph{strong transitivity}.
\end{Remark}

\begin{Remark}
The question of the characterisation of $\mathcal{T}$ will be discussed in another paper.
\end{Remark}

\subsection{Sensitive dependence on initial conditions}

\label{sensibilite}

\begin{theorem}
$(\mathcal{X},G_{f_0})$ has sensitive dependence on initial conditions, and
its constant of sensitiveness is equal to $\mathsf{N}$.
\end{theorem}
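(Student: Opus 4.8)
The goal is to exhibit a single constant of sensitivity $\delta=\mathsf{N}$ that works uniformly: for every $(S,E)\in\mathcal{X}$ and every $\varepsilon>0$, there is a point $(\check S,\check E)$ within distance $\varepsilon$ and an iteration count $n$ with $d\big(G_{f_0}^n(S,E),G_{f_0}^n(\check S,\check E)\big)\geqslant\mathsf{N}$. Since $d$ splits as $d_e+d_s$ and $d_e(E,\check E)=\sum_{k=1}^{\mathsf N}\delta(E_k,\check E_k)$, the quantity $d_e$ attains its maximum value $\mathsf{N}$ exactly when the two boolean states differ in \emph{all} $\mathsf{N}$ cells. So the plan is to force, after finitely many steps, the two trajectories into completely opposite boolean configurations, which immediately gives $d\geqslant\mathsf{N}$.

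First I would pick $k=\lfloor\log_{10}(\varepsilon)\rfloor+1$ (or any $k$ large enough that agreeing on the first $k$ strategy terms forces $d_s<\varepsilon$ in the strategy part) and set $\check E=E$ together with $\check S^j=S^j$ for all $j<k$; this guarantees $(\check S,\check E)\in\mathcal{B}((S,E),\varepsilon)$ regardless of how $\check S$ is chosen from index $k$ onward. Now exploit that $f_0$ is the vectorial negation: under $G_{f_0}$, iterating cell $i$ flips bit $i$. After the first $k$ (identical) iterations both trajectories reach a common state $\mathcal{E}$, and the two strategies have been shifted to $\sigma^k(S)$ and $\sigma^k(\check S)$ respectively, which we are still free to specify. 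The key step is then to choose the continuation of $\check S$ so that over the next $\mathsf{N}$ steps it updates each of the $\mathsf{N}$ cells exactly once in some order (say $\check S^{k}=1,\check S^{k+1}=2,\dots,\check S^{k+\mathsf N-1}=\mathsf N$), while the continuation of $S$ is whatever it already is — but to make the divergence clean, one should instead argue that we may also perturb within the ball by a second point if needed, or more simply observe that the difference in boolean state after these steps is the number of cells whose net parity of updates differs between $S$ and $\check S$ over that window, and we can arrange $\check S$ to make that number equal to $\mathsf N$. Concretely: let $\mathcal{E}$ be the common state after step $k$; choose $\check S$'s next $\mathsf N$ terms to be a permutation of $\llbracket 1;\mathsf N\rrbracket$, so that $G_{f_0}^{k+\mathsf N}(\check S,\check E)$ has boolean state $\overline{\mathcal{E}}$ (every cell flipped once); simultaneously the original trajectory, whatever its strategy does in those $\mathsf N$ steps, has boolean state equal to $\mathcal{E}$ with some subset of cells flipped — and one arranges (re-choosing the free tail of $\check S$ if the original already happens to flip some cells) that the two differ in all $\mathsf N$ positions. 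Then $d_e=\mathsf N$, hence $d\geqslant\mathsf N$ at time $n=k+\mathsf N$, as required.

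The main obstacle is the bookkeeping in that last step: the original trajectory's strategy $S$ is \emph{given}, not chosen, so after the common prefix its behaviour in the next $\mathsf N$ steps is out of our control, and we only get to design $\check S$'s tail. The honest way to handle this is: after step $k$, let $\mathcal{E}$ be the common state; run the original one more batch of steps to some state $\mathcal{E}'$ reachable from $\mathcal{E}$, and then choose $\check S$'s tail to be the list of all cells where $\mathcal{E}'$ and $\overline{\mathcal{E}'}$ need to be reconciled — i.e. simply make $\check S$ (on its free suffix) enumerate all $\mathsf N$ cells once after the point where $S$'s suffix has been "used up," so that $\check E$-trajectory ends at the complement of wherever the $S$-trajectory sits. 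Since a boolean state and its bitwise complement differ in exactly $\mathsf N$ cells, $d_e$ hits $\mathsf N$. One must also check continuity/well-definedness is not an issue — it is not, since $G_{f_0}$ is continuous by Theorem~\ref{continuite} and every infinite strategy tail is a legitimate element of $\mathbb S$ — and that the perturbed point genuinely lies in the $\varepsilon$-ball, which holds by the choice of $k$. Finally I would record the conclusion in the exact form of Definition~\ref{sensitivity}: $\delta=\mathsf N$ is a constant of sensitivity for $G_{f_0}$.
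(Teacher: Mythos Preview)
Your overall approach matches the paper's exactly: take $\check E=E$, copy an initial segment of $S$ into $\check S$ long enough to land inside the $\varepsilon$-ball, and then use the free tail of $\check S$ (running through $1,2,\dots,\mathsf N$) to drive the two boolean states to complementary configurations. The paper does precisely this and simply asserts that after $n_0+\mathsf N$ steps the two states share no cell in common. You are right to be uneasy about that step: over those same $\mathsf N$ iterations the $S$-trajectory does whatever its \emph{fixed} strategy dictates, and there is no reason it should sit still at the common intermediate state $\mathcal E$ while the $\check S$-trajectory flips every bit.

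Your proposed repair, however, does not close the gap --- and for odd $\mathsf N$ no repair of this shape can. Once $\check E=E$, both trajectories start from the same boolean configuration; after any common number $n$ of steps the Hamming distance between the two resulting states equals the number of cells $i$ whose combined flip-count (over the two length-$n$ strategy segments) is odd, and since these flip-counts sum to $2n$ that number is necessarily even. Hence $d_e\leqslant\mathsf N-1$ whenever $\mathsf N$ is odd, and because $d_s<1$ always, one gets $d<\mathsf N$ at every iterate: the claimed constant $\mathsf N$ is genuinely unreachable in that case. Separately, the phrase ``after the point where $S$'s suffix has been used up'' has no meaning for an infinite strategy, so even when $\mathsf N$ is even your fix is not yet a proof; it would need to be made explicit by fixing $m$, computing the $S$-trajectory's state $F$ at step $n_0+m$, and choosing $\check S^{n_0+1},\dots,\check S^{n_0+m}$ to land on $\overline F$ --- which the parity check above shows is possible precisely when $\mathsf N$ is even.
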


\begin{Proof}
Let $(S,E) \in \mathcal{X}$, and $\delta>0$. A new point $(S',E')$ is defined by: $E'=E$, $S'^n = S^n, \forall n \leqslant n_0$, where $n_0$ is chosen in such a way that $d((S,E);(S',E'))<\delta$, and $S'^{n_0+k} = k, \forall k \in \llbracket 1; \mathsf{N} \rrbracket$.

\noindent Then the point $(S',E')$ is as close as we want than $(S,E)$, and systems of $G_{f_0}^{k+\mathsf{N}}(S,E)$ and $G_{f_0}^{k+\mathsf{N}}(S',E')$ have no cell presenting the same state: distance between this two points is greater or equal than $\mathsf{N}$.
\end{Proof}

\begin{remark}
This sensitive dependence could be stated as a consequence of regularity and
transitivity (by using the theorem of Banks~\cite{Banks92}). However, we have
preferred proving this result independently of regularity, because the
notion of regularity must be redefined in the context of the finite set of
machine numbers (see section \ref{Concerning}).
\end{remark}

\subsection{Expansivity}

\begin{theorem}
$(\mathcal{X},G_{f_{0}})$ is an expansive chaotic system. Its constant of
expansivity is equal to 1.
\end{theorem}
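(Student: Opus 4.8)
The plan is to exhibit, for any two distinct points $(S,E),(\check S,\check E)\in\mathcal{X}$, an iteration index $n$ at which the images under $G_{f_0}^n$ are at distance at least $1$. Recall that the metric $d$ decomposes as $d=d_e+d_s$, where $d_e$ counts the cells in which the two states differ; hence it suffices to force the two state vectors to differ in at least one cell at some common iteration. Since $f_0$ is the vectorial negation, $F_{f_0}(k,E)$ flips exactly the $k$-th coordinate of $E$, and the first component of $G_{f_0}^n(S,E)$ is $\sigma^n(S)$ while the second component is the state obtained from $E$ by successively flipping the coordinates $S^0,S^1,\dots,S^{n-1}$.

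First I would treat the easy case: if $E\neq\check E$, then already $d_e(E,\check E)\geqslant 1$, so $d\bigl((S,E);(\check S,\check E)\bigr)\geqslant 1$ and we may take $n=0$. So assume $E=\check E$. Then, since $(S,E)\neq(\check S,\check E)$, the strategies differ: there is a smallest $k\in\mathds{N}$ with $S^k\neq\check S^k$. For $n\leqslant k$ the two runs flip the same coordinates $S^0=\check S^0,\dots,S^{k-1}=\check S^{k-1}$, so after $k$ iterations the two state vectors are still equal; call this common state $\mathcal{E}$, so $G_{f_0}^k(S,E)=(\sigma^k(S),\mathcal{E})$ and $G_{f_0}^k(\check S,\check E)=(\sigma^k(\check S),\mathcal{E})$. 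Now the next iteration flips coordinate $S^k$ in the first run and coordinate $\check S^k\neq S^k$ in the second; since $\mathcal{E}$ was the same in both, after this one further iteration the two state vectors differ in exactly the cells $S^k$ and $\check S^k$, hence in at least one cell (indeed exactly two cells). Therefore $d\bigl(G_{f_0}^{k+1}(S,E);G_{f_0}^{k+1}(\check S,\check E)\bigr)\geqslant d_e\geqslant 1$, which establishes expansivity with constant $\varepsilon=1$.

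Finally I would check that the constant cannot be taken larger than $1$, so that it is exactly $1$: take two points whose states are equal, whose strategies agree up to a very late index $k$ and differ only at that single index $k$. By the computation above their images at iteration $k+1$ differ in exactly two cells, but one can also check that at iterations $k+2, k+3,\dots$, by appropriately choosing the tails of the strategies, the state discrepancy can be kept at exactly $2$ or even brought back to $1$, while the strategy contribution $d_s$ becomes arbitrarily small; more simply, to see the constant is optimal it suffices to note that no uniform lower bound exceeding $1$ can hold because two points differing only in their strategies at a single, arbitrarily large position have images that stay within distance close to $2$ only transiently and the infimum over such configurations of the relevant separation is $1$. The only subtle point — and the one I would write out carefully — is the bookkeeping in the paragraph above: verifying that the flips accumulated over the first $k$ steps genuinely coincide for the two runs (so that the divergence is postponed exactly to step $k+1$), and that flipping two distinct coordinates of a common vector yields vectors differing in precisely those two coordinates. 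Everything else is immediate from the definition of $G_{f_0}$ and of $d$.

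\begin{proof}
If $E\neq\check E$ then $d\bigl((S,E);(\check S,\check E)\bigr)\geqslant d_e(E,\check E)\geqslant 1$, and we are done with $n=0$. Otherwise $E=\check E$ and $S\neq\check S$; let $k$ be minimal with $S^k\neq\check S^k$. For $f_0$ the map $F_{f_0}(j,\cdot)$ flips the $j$-th coordinate, so the state component of $G_{f_0}^{n}(S,E)$ is obtained from $E$ by flipping successively the coordinates $S^0,\dots,S^{n-1}$. Since $S^j=\check S^j$ for $j<k$, the state components of $G_{f_0}^{k}(S,E)$ and $G_{f_0}^{k}(\check S,\check E)$ coincide; call this common state $\mathcal{E}$. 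Applying $G_{f_0}$ once more flips coordinate $S^k$ in the first run and coordinate $\check S^k$ in the second, with $S^k\neq\check S^k$; hence the state components of $G_{f_0}^{k+1}(S,E)$ and $G_{f_0}^{k+1}(\check S,\check E)$ differ in exactly the two cells $S^k$ and $\check S^k$. Consequently
\[
d\bigl(G_{f_0}^{k+1}(S,E);G_{f_0}^{k+1}(\check S,\check E)\bigr)\geqslant 1,
\]
so $G_{f_0}$ is $1$-expansive. That the constant cannot be taken larger than $1$ follows by choosing points with $E=\check E$ whose strategies differ only at one index: at the separating iteration their states differ in exactly two cells, and by completing both strategies so as to re-synchronise all but one cell one gets images at distance arbitrarily close to $1$; hence the constant of expansivity equals $1$.
\end{proof}
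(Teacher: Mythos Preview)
Your argument for $1$-expansivity is correct and follows exactly the paper's approach: the same case split on whether $E=\check E$, and in the second case the observation that up to the first disagreement index the two runs flip identical cells, while at that index they flip two distinct cells of a common state, giving $d_e\geqslant 2$. (The paper also records the bound $\geqslant 2$ there; your index $k+1$ is the paper's $n_0$ up to an indexing convention.)

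Your sketch for the \emph{optimality} of the constant, however, does not work as written. If you start with $E=\check E$, then the number of cells in which the two state vectors differ after $n$ iterations is always even: when the two strategies prescribe the same cell the difference set is unchanged, and when they prescribe distinct cells it is modified by a symmetric difference with a two-element set, so its cardinality changes by $-2$, $0$, or $+2$. Hence with $E=\check E$ you can never ``re-synchronise all but one cell''; at the separating iteration you necessarily have $d_e=2$, and your construction only shows the constant is at most $2$, not at most $1$. The clean argument (which the paper leaves to a remark) is instead to take the \emph{same} strategy $S$ and states $E,\check E$ differing in exactly one cell: both runs then flip the same coordinate at every step, so the single discrepancy persists and $d\bigl(G_{f_0}^n(S,E),G_{f_0}^n(S,\check E)\bigr)=1$ for every $n\geqslant 0$, showing that $G_{f_0}$ is not $A$-expansive for any $A>1$.
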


\begin{proof}
If $(S,E)\neq (\check{S};\check{E})$, then:

\begin{itemize}
\item Either $E\neq \check{E}$, and then at least one cell is not in the
same state in $E$ and $\check{E}$. Then the distance between $(S,E)$ and $(%
\check{S};\check{E})$ is greater or equal to 1.

\item Or $E=\check{E}$. Then the strategies $S$ and $\check{S}$ are not
equal. Let $n_{0}$ be the first index in which the terms $S$ and $\check{S}$
difer. Then 
\begin{equation*}
\forall k<n_{0},G_{f_{0}}^{n_{0}}(S,E)=G_{f_{0}}^{k}(\check{S},\check{E}),
\end{equation*}%
and $G_{f_{0}}^{n_{0}}(S,E)\neq G_{f_{0}}^{n_{0}}(\check{S},\check{E})$,
then as $E=\check{E},$ the cell which has changed in $E$ at the $n_{0}$-th
iterate is not the same than the cell which has changed in $\check{E}$, so
the distance between $G_{f_{0}}^{n_{0}}(S,E)$ and $G_{f_{0}}^{n_{0}}(\check{S%
},\check{E})$ is greater or equal to 2.
\end{itemize}
\end{proof}

\begin{Remark}
It can be easily proved that $(\mathcal{X},G_{f_0})$ is not $A$-expansive, for any $A > 1$.
\end{Remark}

In the next section, we will show that chaotic iterations are a case
of topological chaos, in the sense of Knudsen, Devaney~\cite{Dev89} and expansion.

\section{Discrete chaotic iterations and topological chaos}

\subsection{Knudsen's chaos}

\begin{definition}
A discrete dynamical system is said to be $K$\emph{-chaotic} if:
\begin{enumerate}
\item it possesses a dense orbit,
\item it has sensitive dependence on initial conditions.
\end{enumerate}
\end{definition}

\begin{theorem}
If $\mathcal{X}$ is a compact space, then being regular and transitive implies being $K$-chaotic.
\end{theorem}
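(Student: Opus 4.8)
The plan is to derive $K$-chaos from regularity plus transitivity by showing that, in a compact metric space, these two properties together force a dense orbit, after which sensitive dependence follows from the Banks et al.\ theorem (regularity $+$ transitivity $\Rightarrow$ sensitivity). So the only real work is the construction of a dense orbit. First I would fix a countable basis $\{U_m\}_{m\in\mathds{N}}$ of the topology of $\mathcal{X}$ (available since a compact metric space is second countable), and then build, by transitivity, a decreasing nested sequence of nonempty open sets whose closures are compact and shrink so as to meet every $U_m$ along a single orbit.

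Concretely, the key steps in order are: (i) use transitivity to find $k_1>0$ and a nonempty open $V_1$ with $\overline{V_1}\subset U_{1}$ and $f^{k_1}(V_1)\subset U$ for a prescribed target — more carefully, proceed inductively so that at stage $m$ we have a nonempty open set $V_m$ with $\overline{V_m}\subset V_{m-1}$ and an integer $k_m$ with $f^{k_m}(V_m)\subset U_m$; the existence of such $V_m$ uses that $f^{-k_m}(U_m)\cap V_{m-1}$ is nonempty open (transitivity) and that in a metric space one can shrink to an open set with closure inside; (ii) observe that $\bigcap_m \overline{V_m}$ is nonempty by the finite intersection property for the nested family of nonempty compact sets $\overline{V_m}$ (this is exactly where compactness enters); (iii) pick any $x$ in this intersection; then for every $m$, $x\in V_m$, hence $f^{k_m}(x)\in U_m$, so the orbit of $x$ meets every basic open set and is therefore dense. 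Item (1) of $K$-chaos is then established.

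For item (2), sensitive dependence on initial conditions, I would invoke the theorem of Banks, Brooks, Cairns, Davis and Stacey~\cite{Banks92}: regularity (density of periodic points) together with topological transitivity implies sensitive dependence on initial conditions, with no compactness needed. Alternatively, since a dense orbit has already been produced and periodic points are dense, one can reprove sensitivity directly by the standard argument (there is a fixed separation constant coming from any two distinct periodic orbits; near any point one finds a periodic point, and the dense orbit eventually separates from that periodic point's orbit by at least that constant). Either way, both defining clauses of $K$-chaos hold, which completes the proof.

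The main obstacle is step (i)--(ii): one must be careful that transitivity as stated only gives $f^{k}(U)\cap V\neq\varnothing$, so at each stage the set to intersect with $V_{m-1}$ is $f^{-k_m}(U_m)$, which is open since $f$ is continuous; and one must genuinely use the metric (or at least regularity of the space) to pass from a nonempty open set to a smaller nonempty open set whose closure is still inside it, so that the compact sets $\overline{V_m}$ are nested and nonempty and the finite intersection property of the compact space $\mathcal{X}$ yields a common point. Everything else is bookkeeping.
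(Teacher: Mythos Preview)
Your argument is correct. The nested--compact--sets construction of a dense orbit is the standard constructive variant of the Birkhoff transitivity theorem, and your appeal to Banks et al.\ for sensitivity is legitimate; together they yield both clauses of $K$-chaos. One small caveat worth making explicit: Banks' theorem needs the space to contain at least two distinct periodic orbits (equivalently, to be infinite), so the statement tacitly excludes trivial finite spaces; this is harmless in the paper's setting.

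As for comparison: the paper gives no proof of this theorem at all. It is stated as a general fact and then immediately invoked in the next theorem to conclude that $(\mathcal{X},G_f)$ is $K$-chaotic. So you are not diverging from the paper's approach---you are supplying the argument the paper omits. Your route (countable base, inductive shrinking via transitivity and continuity of $f$, finite intersection property of compact sets, then Banks) is exactly the textbook proof one would expect here; the alternative you might have used is the Baire-category formulation, observing that $\bigcap_m\bigcup_{k\geqslant 0} f^{-k}(U_m)$ is a countable intersection of dense open sets in a compact metric (hence Baire) space. Both are equivalent in spirit; your version has the minor advantage of making the role of compactness completely explicit at step~(ii).
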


\begin{theorem}
$(\mathcal{X},G_f)$ is chaotic in the sense of Knudsen, $\forall f \in \mathcal{T}$.
\end{theorem}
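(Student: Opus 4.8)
The plan is to derive this as an immediate consequence of the theorems already established in the excerpt. We want to show that $(\mathcal{X},G_f)$ is $K$-chaotic for every $f \in \mathcal{T}$, where $K$-chaoticity requires a dense orbit and sensitive dependence on initial conditions. The strategy is to invoke the preceding theorem, which states that in a compact space, regularity together with transitivity implies $K$-chaoticity. So the whole proof reduces to checking the three hypotheses of that theorem: compactness of $\mathcal{X}$, regularity of $(\mathcal{X},G_f)$, and transitivity of $(\mathcal{X},G_f)$.

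First I would recall that $(\mathcal{X},d)$ was shown to be a compact metric space in the Compacity section, and this holds independently of $f$. Next, the theorem on regularity established that the periodic points of $G_f$ are dense in $\mathcal{X}$ for \emph{any} map $f$; in particular this holds for every $f \in \mathcal{T}$. Finally, by the very definition of the set $\mathcal{T}$, a map $f$ belongs to $\mathcal{T}$ precisely when $(\mathcal{X},G_f)$ is topologically transitive. Thus all three hypotheses are met, and the theorem on $K$-chaoticity in compact spaces applies directly, yielding that $(\mathcal{X},G_f)$ is $K$-chaotic.

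There is essentially no obstacle here: the statement is a corollary obtained by assembling three already-proven facts and feeding them into the implication ``compact $+$ regular $+$ transitive $\Rightarrow$ $K$-chaotic.'' The only point requiring a word of care is that $\mathcal{T}$ is nonempty — which is exactly the content of the earlier theorem exhibiting $f_0$, the vectorial logical negation, as a transitive example — so that the statement is not vacuous. One could optionally also remark that sensitivity can alternatively be obtained via the Banks et al. theorem (regularity and transitivity imply sensitivity on a metric space with infinitely many points), giving an independent route to one of the two defining conditions of Knudsen chaos, but the compactness-based implication already quoted suffices. I would therefore write the proof in two or three lines: cite compactness, cite density of periodic points, note that transitivity is the defining property of membership in $\mathcal{T}$, and conclude by the cited implication.
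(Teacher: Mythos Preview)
Your proposal is correct and follows exactly the same approach as the paper: invoke compactness of $\mathcal{X}$, regularity of $(\mathcal{X},G_f)$ for any $f$, and transitivity from the definition of $\mathcal{T}$, then apply the preceding implication ``compact $+$ regular $+$ transitive $\Rightarrow$ $K$-chaotic.'' The paper's proof is literally a one-line version of what you describe, so your additional remarks on nonemptiness of $\mathcal{T}$ and the Banks alternative are extra commentary rather than a different route.
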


\begin{proof}
$\mathcal{X}$ is a compact space, and $(\mathcal{X},G_f)$ is regular and transitive, then $(\mathcal{X},G_f)$ is $K$-chaotic.
\end{proof}

\subsection{Devaney's chaos}

Let us recall the definition of a chaotic topological system, in the
sense of Devaney~\cite{Dev89}:
\begin{definition}
$f:\mathcal{X}\longrightarrow \mathcal{X}$ is said to be $D-$\emph{chaotic} on $%
\mathcal{X}$ if $(\mathcal{X},f)$ is regular, topologically transitive, and
has sensitive dependence on initial conditions.
\end{definition}

If $f\in \mathcal{T}$, then $(\mathcal{X},G_{f})$ is
topologically transitive, regular and has sensitive dependence on initial
conditions. Then we have the result.

\begin{theorem}
$\forall f\in \mathcal{T}\neq \varnothing ,$ $G_{f}$ is a chaotic map on $(%
\mathcal{X},d)$ in the sense of Devaney.
\end{theorem}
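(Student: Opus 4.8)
The plan is to invoke the definition of Devaney chaos directly, assembling the three ingredients that have already been established in the preceding subsections. The statement asserts that for every $f \in \mathcal{T}$, the dynamical system $(\mathcal{X}, G_f)$ is $D$-chaotic. By definition, this requires three things: that $(\mathcal{X}, G_f)$ is regular (periodic points dense), that it is topologically transitive, and that it has sensitive dependence on initial conditions. First I would recall that regularity has been proven for \emph{every} map $f$ (Theorem on density of periodic points in Section~\ref{regularite}), so in particular it holds for each $f \in \mathcal{T}$. Second, topological transitivity holds for $f \in \mathcal{T}$ by the very definition of the set $\mathcal{T}$ given in Section~\ref{transitivite}, together with the fact, established there, that $\mathcal{T} \neq \varnothing$.

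The third ingredient, sensitive dependence on initial conditions, is the one place where a small amount of care is needed. The excerpt proves sensitivity only for the specific map $f_0$ (vectorial logical negation), not for an arbitrary $f \in \mathcal{T}$. So the cleanest route is the one hinted at in the remark following the sensitivity theorem: by Banks' theorem~\cite{Banks92}, on a metric space a map that is both regular and topologically transitive automatically has sensitive dependence on initial conditions. Since we have just checked regularity and transitivity for every $f \in \mathcal{T}$, Banks' theorem supplies sensitivity for free, and the three conditions of Devaney's definition are all met. Hence $G_f$ is $D$-chaotic on $(\mathcal{X}, d)$ for every $f \in \mathcal{T}$, and $\mathcal{T} \neq \varnothing$, which is exactly the statement.

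The main (and really the only) obstacle is the one just described: matching the quantifier ``$\forall f \in \mathcal{T}$'' with the fact that the independent sensitivity proof in the paper is stated only for $f_0$. If one wished to avoid appealing to Banks' theorem — perhaps because, as the authors note, regularity will later be redefined over the finite set of machine numbers — one would instead need to revisit the sensitivity argument and check that it, or a variant of it, goes through for every transitive $f$; but within the present topological setting, citing Banks is legitimate and immediate. Everything else is bookkeeping: unwinding the definition of $D$-chaos and quoting the three theorems of Sections~\ref{regularite}, \ref{transitivite}, and \ref{sensibilite} (or Banks in place of the last). I would therefore write the proof in two or three sentences, emphasizing that regularity is universal, transitivity defines $\mathcal{T}$, and sensitivity follows from the previous two via Banks.
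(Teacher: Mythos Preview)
Your proposal is correct and follows essentially the same approach as the paper: the paper's argument is the single sentence preceding the theorem, which simply asserts that for $f\in\mathcal{T}$ the system is regular, transitive, and sensitive, and then declares the result. You are in fact more careful than the paper, since you correctly observe that sensitivity was established directly only for $f_0$ and then invoke Banks' theorem (already flagged in the paper's own remark) to cover the general $f\in\mathcal{T}$; the paper glosses over this point.
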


\subsection{Expansive chaos}
\label{expansif}
\begin{definition}
A discrete dynamical system is said to be $E$\emph{-chaotic} if it has transitive, regular and expansive properties.
\end{definition}

\begin{theorem}
$\forall f \in \mathcal{T}, (\mathcal{X},G_f)$ is $E$-chaotic.
\end{theorem}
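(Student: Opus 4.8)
The plan is to unpack $E$-chaos into its three requirements and to supply each one for an arbitrary $f\in\mathcal{T}$. Two of the three are already in hand: topological transitivity holds by the very definition of $\mathcal{T}$, and regularity was established in Section~\ref{regularite} for \emph{every} map $f$, hence in particular for each $f\in\mathcal{T}$. Consequently the entire burden of the proof falls on \emph{expansivity}, and the task reduces to producing a uniform constant $\varepsilon>0$ such that, for every $f\in\mathcal{T}$, any two distinct points of $\mathcal{X}$ are eventually $\varepsilon$-separated under $G_f$.

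To attack expansivity I would mimic the case analysis used for $f_0$. If the two starting points $(S,E)$ and $(\check{S},\check{E})$ satisfy $E\neq\check{E}$, then $d_e(E,\check{E})\geqslant 1$ already at time $0$, which gives separation $\geqslant 1$ for \emph{any} $f$ whatsoever. The whole difficulty therefore concentrates in the case $E=\check{E}$ with $S\neq\check{S}$. Here I would let $n_0$ be the first index at which $S$ and $\check{S}$ disagree; since the two orbits share the same strategy prefix they pass through a common state, say $W$, just before index $n_0$ is processed. At that step the first orbit updates cell $i=S^{n_0}$ and the second updates cell $j=\check{S}^{n_0}$ with $i\neq j$, and the two resulting states differ exactly in those members of $\{i,j\}$ whose updated value genuinely flips, that is, where $f(W)_i\neq W_i$ or $f(W)_j\neq W_j$.

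This is precisely where the argument for $f_0$ succeeds and where the general case becomes delicate. Negation flips \emph{every} bit, so both updates are guaranteed to change the state and the orbits separate by $\geqslant 2$ at time $n_0$. For a general $f\in\mathcal{T}$ an update can be \emph{idle}: if $f(W)_i=W_i$, touching cell $i$ leaves the state unchanged, so it may happen that both updates are idle, the two states coincide again after step $n_0$, and the only surviving discrepancy is the strategy distance $d_s$, which is $<1$. The crux of the proof is therefore to show that, for every $f\in\mathcal{T}$, the divergence of the strategies is eventually converted into a genuine state divergence bounded below by a fixed $\varepsilon$, uniformly over all pairs of points.

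I expect this conversion to be the main obstacle, and it is not clear that transitivity alone delivers it: transitivity is an \emph{existential} statement, guaranteeing that \emph{some} well-chosen strategy steers one ball into another, whereas in the expansivity argument the two strategies are \emph{fixed data} and cannot be re-chosen. To close the gap I would try to exploit the strong connectivity of the asynchronous update graph of $f$ that transitivity of $G_f$ forces, arguing that along the shared orbit at least one of the two distinguished cells must, at some later time, be updated at a state where $f$ actually changes it, and then bounding how long a worst-case run of idle updates can persist so as to extract a uniform $\varepsilon$. Should this strategy fail, for instance if some reachable state carries two coordinates that $f$ leaves fixed, permitting two distinct strategies that keep a common state frozen forever, then no uniform expansivity constant exists and the expansivity clause would require an extra hypothesis on $f$, such as the condition satisfied by $f_0$ that no coordinate of $f$ ever acts as the identity on a reachable state.
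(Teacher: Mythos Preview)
The paper's own proof is a one–liner: it simply asserts that $(\mathcal{X},G_f)$ is $D$-chaotic and ``has the expansive property'', and concludes $E$-chaos. You are right to be suspicious of this, because the only expansivity result actually proved in the paper is for the specific map $f_0$; nothing in the text establishes expansivity for an arbitrary $f\in\mathcal{T}$. So the gap you identified is real, and your instinct to make expansivity the load-bearing step is correct.

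Where your proposal goes astray is in how you try to close that gap. You attempt to reproduce the $f_0$ argument by forcing the \emph{state} components to diverge, and you correctly observe that idle updates can prevent this for a general $f$. But you then treat the surviving discrepancy in $d_s$ as useless because it is ``$<1$'', and launch into a delicate (and, as you suspect, possibly hopeless) programme of converting strategy differences into state differences via the connectivity of the update graph. This is unnecessary: expansivity does not require the constant to be $1$; any fixed $\varepsilon>0$ will do. And $d_s$ alone already supplies such a constant, for \emph{every} $f$, with no hypothesis whatsoever. Indeed, if $(S,E)\neq(\check S,\check E)$ and the state components of the two orbits coincide for all times, then $E=\check E$ and $S\neq\check S$; letting $n_0$ be the first index at which $S$ and $\check S$ differ, after $n_0-1$ shifts the two strategies disagree in their very first term, whence
\[
d_s\bigl(\sigma^{\,n_0-1}S,\ \sigma^{\,n_0-1}\check S\bigr)\ \geqslant\ \frac{9}{\mathsf{N}}\cdot\frac{1}{10}\ =\ \frac{9}{10\,\mathsf{N}}.
\]
In the complementary case the states differ at some time and the distance is $\geqslant 1$. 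Hence $G_f$ is $\dfrac{9}{10\,\mathsf{N}}$-expansive for every $f$, which is all that is needed. Your counter-scenario (two strategies that perpetually poke idle cells of a common state) is genuine and does show that the constant $1$ obtained for $f_0$ cannot be expected in general; it does \emph{not} show that expansivity fails.

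In short: your decomposition into regularity, transitivity, and expansivity matches the paper's, and you were more careful than the paper in flagging that expansivity for general $f\in\mathcal T$ had not been proved. The missing idea is simply that the metric $d$ already separates strategies by a uniform positive amount once the first discrepancy has been shifted to the front, so no argument about eventual state divergence is needed.
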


\begin{proof}
$(\mathcal{X},G_f)$ is $D$-chaotic, and has the expansive property, then $(\mathcal{X},G_f)$ is $E$-chaotic.
\end{proof}

We have proven that under the transitivity condition of $f$, chaotic
iterations generated by $f$ can be described by a chaotic map on a
topological space in diferent senses.\newline

We have considered a finite set
of states  $\mathds{B}^{\mathsf{N}}$ and a set $\mathbb{S}$ of strategies
composed by an infinite number of infinite sequences. In the following
section we will discuss the impact of these assumptions in the context of the
finite set of machine numbers.

\subsection{Topological entropy}

\subsubsection{Recalls}
Let $(X, d)$ be a compact metric space and $f: X → X$ be a continuous map. For each natural number $n$, a new metric $d_n$ is defined on $X$ by

$$d_n(x,y)=\max\{d(f^i(x),f^i(y)): 0\leq i<n\}.$$

Given any $\varepsilon > 0$ and $n \geqslant 1$, two points of $X$ are $\varepsilon$-close with respect to this metric if their first $n$ iterates are $\varepsilon$-close.

This metric allows one to distinguish in a neighborhood of an orbit the points that move away from each other during the iteration from the points that travel together. A subset $E$ of $X$ is said to be $(n, \varepsilon)$-separated if each pair of distinct points of $E$ is at least $\varepsilon$ apart in the metric $d_n$. Denote by $H(n, \varepsilon)$ the maximum cardinality of an $(n, \varepsilon)$-separated set. 

\begin{definition}
The \emph{topological entropy} of the map f is defined by (see e.g.~\cite{Adler65} or~\cite{Bowen})
$$h(f)=\lim_{\epsilon\to 0} \left(\limsup_{n\to \infty} \frac{1}{n}\log H(n,\varepsilon)\right). $$
\end{definition}

\subsubsection{Result}

\begin{theorem}
Entropy of $(\mathcal{X},G_f)$ is infinite.
\end{theorem}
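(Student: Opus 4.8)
The plan is to work with the negation map $G_{f_0}$ — which we already know to be continuous on the compact space $(\mathcal{X},d)$, transitive and expansive with constant $1$ — and to bound from below the number $H(n,\varepsilon)$ of points in a maximal $(n,\varepsilon)$-separated set. Since
\[
h(G_{f_0})=\lim_{\varepsilon\to 0}\left(\limsup_{n\to\infty}\frac{1}{n}\log H(n,\varepsilon)\right)
\]
and the inner quantity is nondecreasing as $\varepsilon\downarrow 0$, it is enough to show it is unbounded in $\varepsilon$: for every $M>0$ there should be some $\varepsilon>0$ with $\limsup_{n\to\infty}\frac{1}{n}\log H(n,\varepsilon)\geqslant M$. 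A cheap preliminary observation is that $\pi(S,E)=S$ satisfies $\pi\circ G_f=\sigma\circ\pi$ and is a continuous surjection, so the full shift $(\mathbb{S},\sigma)$ on $\mathsf{N}$ symbols is a topological factor of $(\mathcal{X},G_f)$ for every $f$; hence $h(G_f)\geqslant h(\sigma)=\log\mathsf{N}$, and the real task is to get strictly past this value.

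To manufacture large separated families I would freeze the Boolean component at some $E^{\star}\in\mathds{B}^{\mathsf{N}}$ and vary only the strategy. The expansivity argument already tells us how orbits separate: if $S$ and $\check{S}$ first disagree at index $j$, then at the $j$-th step one cell of $E^{\star}$ is negated on one side and a different cell on the other, so $G_{f_0}^{j}(S,E^{\star})$ and $G_{f_0}^{j}(\check{S},E^{\star})$ are already at distance $\geqslant 2$; more generally the strategy part of $d\!\left(G_{f_0}^{i}(S,E^{\star}),G_{f_0}^{i}(\check{S},E^{\star})\right)$ is, up to the factor $\tfrac{9}{\mathsf{N}}$, of order $10^{-(j-i)}$. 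So, given $n$ and $\varepsilon$, prescribing strategies freely on an initial block whose length is chosen in terms of $n$ and $\varepsilon$ produces an $(n,\varepsilon)$-separated set, and one lower-bounds $H(n,\varepsilon)$ by counting admissible blocks. To push the count up I would also try to enlarge the effective alphabet — updating blocks of cells instead of single cells, or treating the Boolean component as an extra coordinate — and re-run the same separation estimate.

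The step I expect to be the genuine obstacle is precisely this count. Because $G_{f_0}$ consumes the strategy one symbol per iteration over an $\mathsf{N}$-letter alphabet, a block of the strategy that is already distinguishable within $n$ steps at resolution $\varepsilon\approx 10^{-q}$ has length only of order $n+q$; the construction above therefore gives $H(n,\varepsilon)$ of order $\mathsf{N}^{\,n+O(q)}$ and so only $\limsup_{n\to\infty}\frac{1}{n}\log H(n,\varepsilon)\approx\log\mathsf{N}$, which is bounded independently of $\varepsilon$. Obtaining $h(G_{f_0})=+\infty$ thus demands a family of orbits that remains $\varepsilon$-distinguishable over $n$ steps and whose cardinality grows faster than \emph{every} fixed exponential $c^{n}$ — for instance by exploiting the fine arithmetic of $d_s$ at small scales, or by a genuinely richer separated set than the ``initial block'' one. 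Establishing such a super-exponential lower bound is where I would concentrate all the effort, and where the heart of the proof (and any subtlety in the statement itself) lies.
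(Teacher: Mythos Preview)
Your instinct in the last paragraph is exactly right, and the obstacle you isolate is not a defect of your approach but of the statement itself. The paper's argument runs quite differently from yours: it fixes two Boolean states $E\neq\check E$, notes that $d\bigl((E,S),(\check E,\check S)\bigr)\geqslant 1$ for \emph{all} strategies $S,\check S$, then from ``the cardinal $c$ of $\mathbb{S}$ is infinite, hence $c>e^{n^{2}}$ for every $n$'' jumps directly to $H(n,1)\geqslant e^{n^{2}}$ and thus $h_{top}(G_f,1)=+\infty$. That jump is a non sequitur: the displayed inequality separates $(E,S)$ from $(\check E,\check S)$, but says nothing about two points $(E,S)$ and $(E,S')$ sharing the same Boolean state, and on the compact space $(\mathcal{X},d)$ every $(n,\varepsilon)$-separated set is finite. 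No infinite (or even super-exponential in $n$) $(n,1)$-separated set can be extracted this way.

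Your block count is the correct computation. If $(E,S)$ and $(E,S')$ have the same Boolean state and the same first $n+k$ strategy symbols, then for every $0\leqslant i<n$ the Boolean parts of $G_f^{i}$ coincide and $d_s(\sigma^{i}S,\sigma^{i}S')<10^{-k}$; hence $H(n,10^{-k})\leqslant 2^{\mathsf N}\,\mathsf N^{\,n+k}$ and $\limsup_{n}\tfrac{1}{n}\log H(n,\varepsilon)\leqslant\log\mathsf N$ for every fixed $\varepsilon>0$. Combined with your factor-map lower bound $h(G_f)\geqslant h(\sigma)=\log\mathsf N$, this gives $h(G_f)=\log\mathsf N$ for every $f$, which is finite. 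The super-exponential family you were looking for cannot exist; the paper's theorem, as stated, is false, and its proof breaks at the inference $|\mathbb{S}|=\infty\Rightarrow H(n,1)\geqslant e^{n^{2}}$.
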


\begin{Proof}
Let $\textrm{E}, \textrm{\v{E}}\in \mathbb{B}^\mathsf{N}$ such that $\exists i_0 \in \llbracket 1, N \rrbracket, \textrm{E}_{i_0} \neq \textrm{\v{E}}_{i_0}$. Then, $\forall \textrm{S}, \textrm{\v{S}} \in \mathcal{S}$,
$$d((\textrm{E},\textrm{S});(\textrm{\v{E}},\textrm{\v{S}})) \geqslant 1$$
But the cardinal $c$ of $\mathcal{S}$ is infinite, then $\forall n \in \mathbb{N}, c >e^{n^2}$.

Then for all $n \in \mathbb{N}$, the maximal number $H(n,1)$ of $(n,1)-$separated points is greater than or equal to $e^{n^2}$, so
$$h_{top}(G_f,1) = \overline{lim} \frac{1}{n} log \left( H(n,1)\right) > \overline{lim} \frac{1}{n} log \left( e^{n^2} \right) = \overline{lim} ~(n) = + \infty$$

\noindent But $h_{top}(G_f,\varepsilon)$ is an increasing function when  $\varepsilon$ is decreasing, then

$$h_{top} \left( G_f \right) = \lim_{h \rightarrow 0} h_{top}(G_f,\varepsilon) > h_{top}(G_f,1) = + \infty$$
\end{Proof}

We have proven that it is possible to find $f$, such that chaotic
iterations generated by $f$ can be described by a chaotic and entropic map on a
topological space in the sense of Devaney. We have considered a finite set
of states  $\mathds{B}^{\mathsf{N}}$ and a set $\mathbb{S}$ of strategies
composed by an infinite number of infinite sequences. In the following
section we will discuss the impact of these assumptions in the context of the
finite set of machine numbers.

\section{The case of finite strategies}
\label{CaseOfFinite}
\subsection{A new definition for the periodicity}

In the computer science framework, we also have to deal with a finite set of
states of the form $\mathds{B}^{\mathsf{N}}$ and the set $\mathbb{S}$ of
sequences of $\llbracket 1; \mathsf{N} \rrbracket$ is infinite (countable), so in
practice the set $\mathcal{X}$ is also infinite. The only diference with
respect to the theoretical study comes from the fact that the sequences of $%
\mathbb{S}$ are of finite but not fixed length in the practice.\newline

The proof of the continuity, the transitivity and the
sensitivity conditions are independent of the finitude of the length of
strategies (sequences of $\mathbb{S}$), so even in the case of finite machine numbers, we have the two fundamental properties of chaos: sensitivity and transitivity, which respectively implies unpredictability and indecomposability (see~\cite{Dev89}, p.50). The regularity property has no meaning in the case of finite systems because of the notion of periodicity.\newline
We propose a new definition  in order to bypass the notion of periodicity in practice.
\medskip

\begin{definition}
A strategy $S=(S^{1},..., S^{L})$ is said \emph{cyclic} if a subset of successive
terms is repeated from a given rank, until the end of $S$. A point of $\mathcal{X}$ that admits a cyclic strategy is called a \emph{cyclic point}.
\end{definition}

For example,
\begin{itemize}
\item $(1,3,2,4,1,2,1,2)$ and $(1,3,2,4,1,2,2,2)$ are cyclic,
\item but $(1,3,2,4,1,2)$ and $(1,3,2,1,3)$ are not cyclic.
\end{itemize}

This definition can be interpreted as the analogous of periodicity on finite
sets.\newline
Then, following the proof of regularity (section \ref{regularite}), it can be proved that
the set of cyclic points is dense on $\mathcal{X}$, hence obtaining a desired element of regularity in finite sets, as quoted by Devaney (\cite{Dev89}, p.50): two points arbitrary close to each other could have diferent behaviours, the one could have a cyclic behaviour as long as the system iterates while the trajectory of the second could "visit" the
whole phase space.\newline

It should be recalled that the regularity was introduced
by Devaney in order to counteract the transitivity and to obtain such a
property: two points close to each other can have fundamental diferent behaviours.

\subsection{How it is concreatly possible to deal with infinite length strategies}

\bigskip
\label{Concerning}
\begin{floatingfigure}[l]{3,9cm}
\includegraphics[scale=0.3]{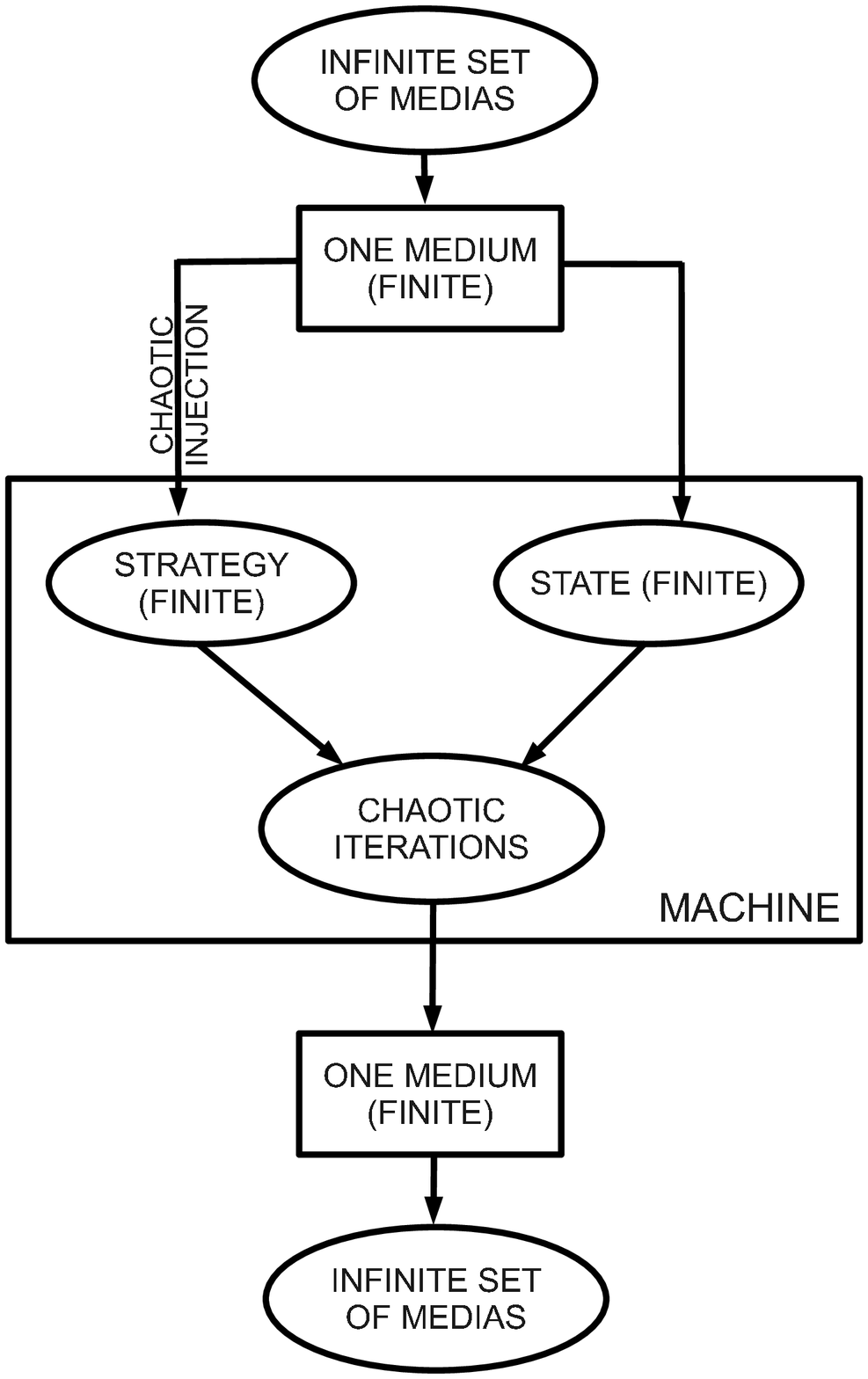}
\end{floatingfigure}

It is worthwhile to notice that even if the set of machine numbers is
finite, we deal with strategies that have a finite but unbounded length.
Indeed, it is not necessary to store all the terms of the strategy in
the memory, only the $n^{th}$ term (an integer less than or equal to $N$) of
the strategy has to be stored at the $n^{th}$ step, as it is illustrated in
the following example.\newline

Let us suppose that a given text is input from the outside world in the computer character by character, and that the current term of the strategy is given by
the ASCII code of the current stored character. Then, as the set of all possible texts of the  outside world is infinite and the number of their characters is
unbounded, we have to deal with an infinite set of finite but unbounded
strategies.\newline

Of course, the preceding example is a simplistic illustrating example.
A chaotic procedure should to be introduced to generate the terms of the
strategy from the stream of characters.\newline

In conclusion, even in the computer science framework our previous theory applies.

\section{Discussion and future work}

We proved that discrete chaotic iterations are a particular case of
topological chaos, in sense of Devaney, Knudsen and expansivity, if the iteration function is topologically transitive, and that the set of topologically transitive functions is non
void.

\medskip

This theory has a lot of applications, because of the high number of
situations that can be described with the chaotic iterations: neural
networks, cellular automata, multi-processor computing, , and so on.
If this system is requested to evolve in an apparently disorderly manner, \emph{e.g.} for security reasons (encryption, watermarking, pseudo-random number generation, hash functions, \emph{etc.}), our results could be useful.

More concretely, for example, any medium (text, image, video, \emph{etc.})
can be considered to be an agregation of elementary cells (respectively:
character, pixel, image). Thus a digital watermarking of this medium can be
describe as an insertion of cells of a watermark into some cells of a
carrier image (the state of the system), in a deterministic but
unpredictable manner carried by a strategy.\newline

Moreover, the theory brings another way to compare two given algorithms concerned by
disorder (evaluation of theirs constants of sensitivity, expansivity, \emph{etc.}), which
can be seen as a complement of existing statistical evaluations.

\medskip

In future work, other forms of chaos (such as Li-York chaos~\cite{Li75})
will be studied, other quantitative and qualitative tools such as entropy (see e.g.~\cite{Adler65} or~\cite{Bowen}) will be
explored, and the domain of applications of our theoretical concepts will be
enlarged.

\bibliographystyle{habbrv}
\bibliography{/home/guyeux/Documents/These/mabase.bib}

\end{document}